\newcommand{\hide}[1]{}
\newcommand{\ABox}{
\raisebox{3pt}{\framebox[6pt]{\rule{6pt}{0pt}}}
}
\newenvironment{proof}{{\bf Proof:}}{\hfill\ABox}
\newtheorem{theorem}{{\bf Theorem}}
\newtheorem{lemma}{Lemma}
\newcommand{\lemlab}[1]{\label{lemma:#1}}
\newcommand{\thmlab}[1]{\label{thm:#1}}
\newcommand{\tablab}[1]{\label{tab:#1}}
\newcommand{\figlab}[1]{\label{fig:#1}}
\newcommand{\seclab}[1]{\label{sec:#1}}
\newcommand{\lemref}[1]{\ref{lemma:#1}}
\newcommand{\thmref}[1]{\ref{thm:#1}}
\newcommand{\secref}[1]{\ref{sec:#1}}
\newcommand{\figref}[1]{\ref{fig:#1}}
\newcommand{\tabref}[1]{\ref{tab:#1}}
 \gdef\xxxmark{%
   \expandafter\ifx\csname @mpargs\endcsname\relax 
     \expandafter\ifx\csname @captype\endcsname\relax 
       \marginpar{xxx}
     \else
       xxx 
     \fi
   \else
     xxx 
   \fi}
 \gdef\xxx{\@ifnextchar[\xxx@lab\xxx@nolab}
 \long\gdef\xxx@lab[#1]#2{{\bf [\xxxmark #2 ---{\sc #1}]}}
 \long\gdef\xxx@nolab#1{{\bf [\xxxmark #1]}}
 \gdef\turnoffxxx{\long\gdef\xxx@lab[##1]##2{}\long\gdef\xxx@nolab##1{}}%
\def\P{{\mathcal P}}
\def\g{{\gamma}}
\def\L{{\Lambda}}
\def\o{{\omega}}
\def\O{{\Omega}}
\def\e{{\varepsilon}}
\def\a{{\alpha}}
\def\b{{\beta}}
\def\bP{{\partial P}}
\def\R{{\mathbb{R}}}
\newcommand{\squeezelist}{\setlength{\itemsep}{0pt}}
\title{%
Conical Existence of Closed Curves \\
on Convex Polyhedra
} 
\author{%
Joseph O'Rourke%
    \thanks{Department of Computer Science, Smith College, Northampton, MA
      01063, USA.
      \protect\url{orourke@cs.smith.edu}.}
\and
Costin V\^{i}lcu%
    \thanks{Institute of Mathematics ``Simion Stoilow'' 
      of the Romanian Academy,
      P.O. Box 1-764,
      RO-014700 Bucharest, Romania.
    \protect\url{Costin.Vilcu@imar.ro}.
}
}
\begin{document}
\maketitle

\begin{abstract}
Let $C$ be a simple, closed, directed curve on the surface of
a convex polyhedron $\P$.
We identify several classes of curves $C$ that ``live on a cone,''
in the sense that $C$ and a neighborhood to one side may be
isometrically
embedded on the surface of a cone $\L$, with the apex $a$ of $\L$
enclosed inside (the image of) $C$;
we also prove that each point of $C$ is ``visible to'' $a$.
In particular, we obtain that these curves have non-self-intersecting
developments in the plane.
Moreover, the curves we identify that live on cones to both sides
support
a new type of ``source unfolding'' of the entire surface of $\P$ to
one
non-overlapping piece, as reported in a companion paper.
\end{abstract}

\section{Introduction}
\seclab{Introduction}
Let $\P$ be the surface of a convex polyhedron,
and let
$C$ be any simple, closed, directed curve on $\P$.
In this paper we address the question of which curves $C$ 
``live on a cone'' to either or both sides.
We first explain this notion, which is based on neighborhoods of $C$.

\paragraph{Living on a Cone.}
An open region $N_L$ is a \emph{vertex-free neighborhood} of $C$ 
to its left if its
right boundary is $C$, and it contains no vertices of $\P$.
In general $C$ will have many vertex-free left neighborhoods, and all
will
be equivalent for our purposes.
We say that $C$ \emph{lives on a cone} to its left
if there exists a cone $\L$ 
and a neighborhood $N_L$
so that $C \cup N_L$ may be embedded isometrically
onto $\L$, and
encloses the cone apex $a$.

A \emph{cone} is a developable surface with curvature zero everywhere
except at one point, its \emph{apex}, which has total incident surface
angle, called the \emph{cone angle}, of at most $2 \pi$.
Throughout, we will consider a cylinder as a cone whose apex is at
infinity with cone angle 0,
and a plane as a cone with apex angle $2 \pi$.
We only care about the intrinsic properties of the cone's surface;
its shape in $\R^3$ is not relevant for our purposes.  So one could
view it as having a circular cross section, although we will often
flatten
it to the plane, in which case it forms a doubly covered triangle
with apex angle half the cone angle.
Except in special cases, the cone $\L$ is unrelated to any cone that
may be formed by extending the faces of $\P$ to the left of $C$.

\begin{figure}[htbp]
\centering
\includegraphics[width=0.5\linewidth]{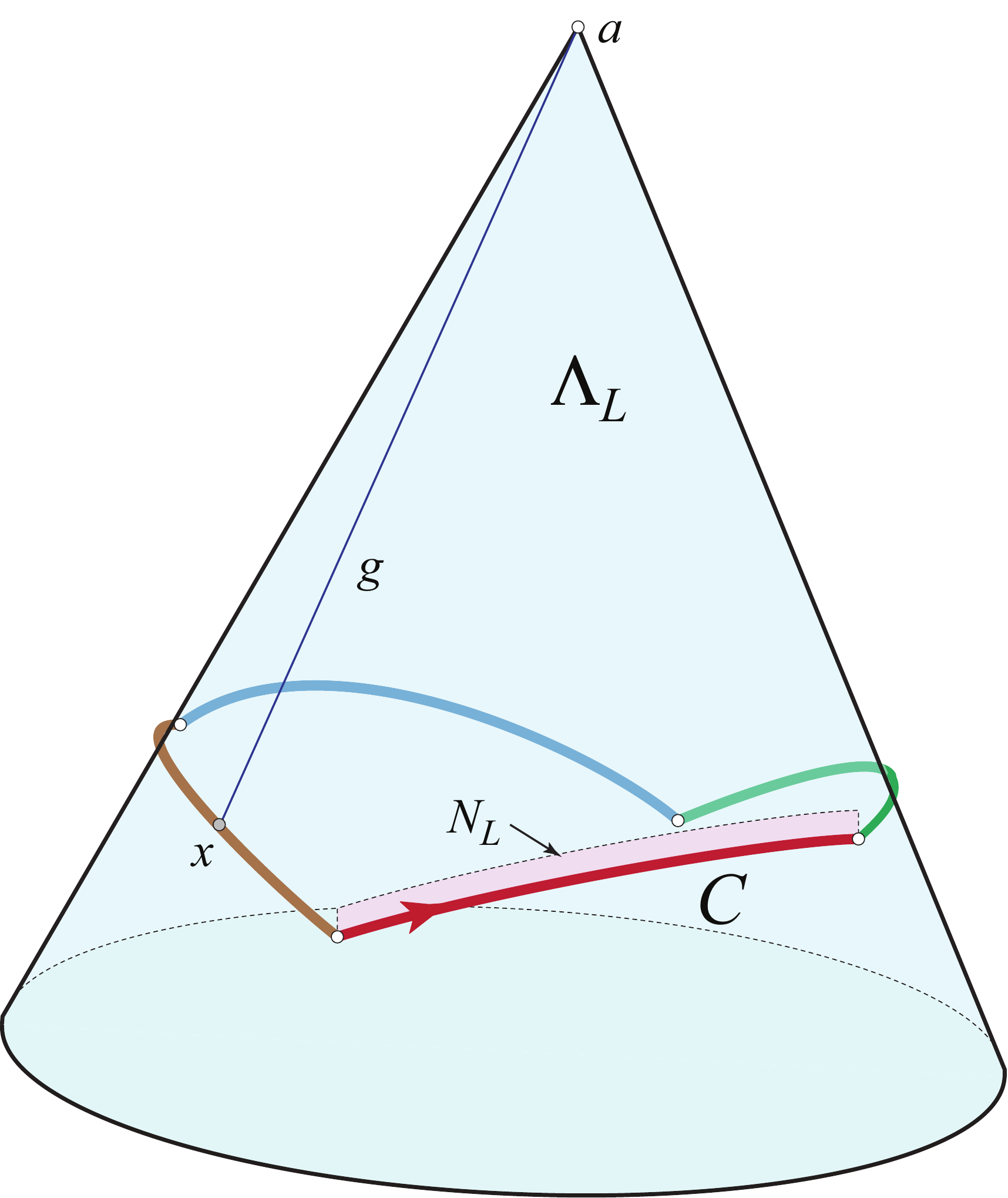}
\caption{A 4-segment curve $C$ which lives
on cone $\L_L$ to its left.
A portion of $N_L$ is shown, 
and a generator $g=ax$ is illustrated.
}
\figlab{Cone3D}
\end{figure}
To say that $C \cup N_L$ embeds isometrically
into $\L$ means that we could cut out $C \cup N_L$ and paste it onto
$\L$
with no wrinkles or tears: the distance between any two points of
$C \cup N_L$ on $\P$ is the same as it is on $\L$.
See Figure~\figref{Cone3D}.
We say that $C$ lives on a cone to its right if $C \cup N_R$ embeds on
the
cone, where $N_R$ is a right neighborhood of $C$
such that the cone apex $a$ is inside (the image of) $C$.
We will call the cones $\L_L$ and $\L_R$ to the left and right of $C$
when we need to distinguish them.
We will see that all four combinatorial possibilities occur:
$C$ may not live on a cone to either side, it may live on a cone to
one
side but not to the other, it may live on different cones to its two
sides,
or live on the same cone to both sides.

\paragraph{Motivations.}
We have two motivations to study curves that live on a cone,
aside from their intrinsic interest.
First,
every simple, closed curve $C$ on a cone $\L$ may be 
\emph{developed} on the plane by rolling $\L$ and transferring
the ``imprint'' of $C$ to the plane.
This will allow us to strengthen a previous result on simple
(i.e., non-self-intersecting) developments of certain curves.
Second, for curves $C$ that live on a cone to both sides,
our results support a generalization of the ``source unfolding''
of a polyhedron.
Both of these motivations will be detailed further (with references)
in Section~\secref{Applications}. 

\paragraph{Curve Classes.}
To describe our results, we introduce a number of different classes of
curves on convex polyhedra,
which
exhibit different behavior with respect to living on a cone.
Altogether, we define eight classes of curves.
All our curves $C$ are simple (non-intersecting), closed, directed curves on a convex polyhedron
$\P$,
and henceforth we will generally drop these qualifications.

For any point
$p \in C$, 
let $L(p)$ be the total surface angle incident to $p$ at the left side of $C$,
and $R(p)$ the angle to the right side.
$C$ is a \emph{geodesic} if
$L(p)  {=} R(p)  {=} \pi$ for every point $p$ on $C$.
Generally this is called a \emph{closed geodesic} in the literature.
When a geodesic is extended on a surface and later crosses itself,
each closed portion generally forms what is known as a
\emph{geodesic loop}: $L(p) {=} R(p)  {=} \pi$
for all but
one exceptional \emph{loop point} $x$, at which it may be that
$L(x) {\neq} \pi$ or $R(x) {\neq} \pi$.
(The loop versions of curves are important because they are in general
easier to find than ``pure'' versions.)

Define a curve $C$ to be \emph{convex} (to the left) if
the angle to the left is
at most $\pi$ at every point $p$: $L(p) {\le} \pi$;
and say that $C$ is a \emph{convex loop} if this condition holds 
for all but
one exceptional \emph{loop point} $p$, at which
$L(p) {>} \pi$ is allowed.

A curve $C$ is a \emph{quasigeodesic} if it is
convex to both sides: $L(p) {\le} \pi$ and $R(p) {\le} \pi$
for all $p$ on $C$.
(This is a notion introduced by Alexandrov to allow geodesic-like
curves to pass through vertices of $\P$.)
A \emph{quasigeodesic loop} satisfies the same condition
except at an exceptional loop point $p$, at
which $L(p) {\le} \pi$ but $R(p) {>} \pi$ (or vice versa) is allowed.
Thus a quasigeodesic loop is convex to one side
and a convex loop to the other side.

Finally, define $C$ to be a \emph{reflex curve}%
\footnote{
    We opt for the term ``reflex'' rather than ``concave''
    for its greater syntactic difference from ``convex.''
}
if the angle to one side (we consistently use the right side) is
at least $\pi$  at every point $p$: $R(p) {\ge} \pi$;
and say that $C$ is a \emph{reflex loop} if this condition holds 
for all but
an exceptional loop point $p$, at which
$R(p) {<} \pi$.

The eight curve classes are then the four listed in the table below,
and their loop variations, which permit violation of the angle
conditions
at one point:
\begin{table}[htbp]
\begin{center}
\begin{tabular}{| c | c  |}
        \hline
\emph{Curve class} & \emph{Angle condition}
         \\ \hline \hline
geodesic & $L(p) = \pi = R(p)$
        \\ \hline
quasigeodesic & $L(p) \le \pi$ and $R(p) \le \pi$
        \\ \hline
convex & $L(p) \le \pi$ 
        \\ \hline
reflex & $R(p) \ge \pi$
        \\ \hline
\end{tabular}
\caption{Curve classes.}
\tablab{Curve.classes}
\end{center}
\end{table}
We now describe relations between the classes.
Most are obvious, following from the definitions.
All the non-loop curves are special cases of their loop version:
a geodesic is a geodesic loop, etc.
A geodesic is a quasigeodesic, and a quasigeodesic is convex to both
sides.
A geodesic loop is a quasigeodesic loop, which is convex to one side
and a convex loop to the other side.
To explain the relationship between convex and reflex curves,
we recall
the notion of ``discrete curvature,'' or simply ``curvature.''

The \emph{curvature} $\o(p)$ at any point $p \in \P$
is the ``angle deficit'':  
$2 \pi$ minus the sum of the face angles incident to $p$. 
The curvature is only nonzero at vertices of $\P$; 
at each vertex it is positive because $\P$ is convex. 
The curvature at the apex of a cone is similarly $2 \pi$ minus the
cone angle.

Define a \emph{corner} of curve $C$ to be any point $p$
at which either $L(p) {\neq} \pi$ or $R(p) {\neq} \pi$.
Let $c_1,c_2,\ldots,c_m$ be the corners of $C$,
which may or may not also be vertices of $\P$.
$C$ ``turns'' at each $c_i$, and is straight at any noncorner point.
Let $\a_i=L(c_i)$ be the surface angle to the left side at $c_i$,
and $\b_i=R(c_i)$ the angle to the right side.
Also let $\o_i = \o(c_i)$ to simplify notation.
We have $\a_i + \b_i + \o_i = 2 \pi$
by the definition of curvature.

Returning to our discussion of curve classes,
a convex curve that passes through no vertices of $\P$ is
a reflex curve to the other side, because $\o_i {=} 0$
and so $\a_i {\le} \pi$ implies that $\b_i {\ge} \pi$.
A convex curve that passes through at most one vertex of $\P$,
say at $c_m$,
is a reflex loop to the other side, with possibly $\b_m < \pi$,
and is a reflex curve to that side if $\a_m + \o_m \le \pi$
because then $\b_m \ge \pi$.
The relationship between convex and reflex is symmetric:
so a reflex curve that passes through no vertices is convex to the
other side, and a reflex curve that passes through one vertex is
a convex loop to the other side.
The other side of a reflex loop is a convex loop,
as will be discussed further in Section~\secref{Reflex} (cf.~Table~\tabref{Reflex}).

We illustrate some of these concepts in
Figure~\figref{IcosaCuboctaCurves}:
(a)~shows an icosahedron, and (b)~a cubeoctahedron.
For both polyhedra, $\o(v)=\frac{1}{3}\pi$ for each vertex $v$ of $\P$.
The curve illustrated in~(a) is convex to both sides, with
$\frac{2}{3}\pi$ to one side and $\pi$ to the other at each
of its five corners.  Thus it is a quasigeodesic.
The curve in~(b) is convex to one side, with
angles 
$$
( 
\tfrac{5}{6} \pi,  \tfrac{5}{6} \pi, \tfrac{1}{2}\pi,  
\tfrac{5}{6} \pi,  \tfrac{5}{6} \pi, \tfrac{1}{2}\pi 
)
$$
at its six corners,
but because the angles to the other side are (respectively)
$$
( 
\tfrac{5}{6} \pi,  \tfrac{5}{6} \pi, \tfrac{7}{6}\pi,  
\tfrac{5}{6} \pi,  \tfrac{5}{6} \pi, \tfrac{7}{6}\pi 
)
$$
it falls outside our classification system to that side
(because it violates convexity at two corners, and
reflexivity at four corners).
\begin{figure}[htbp]
\centering
\includegraphics[width=0.75\linewidth]{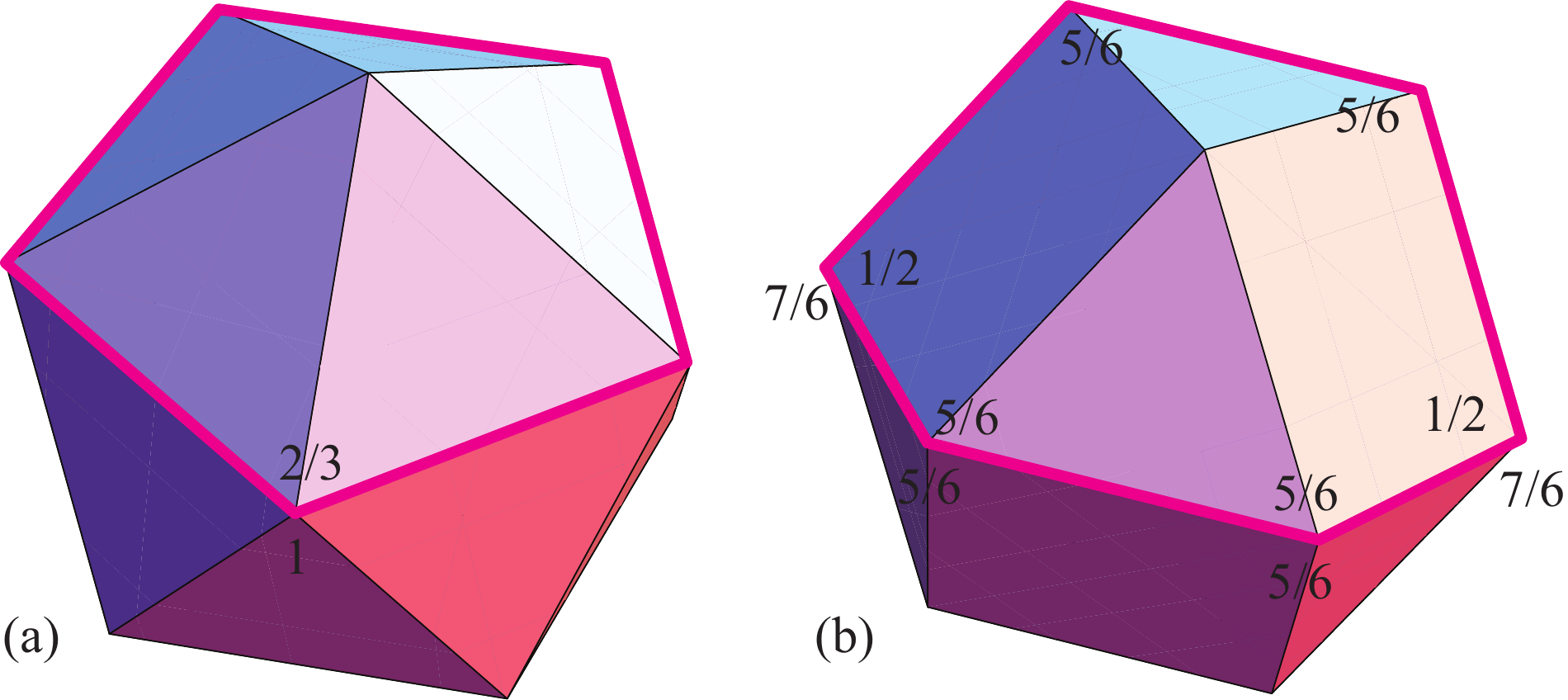}
\caption{(a)~Quasigeodesic curve on a Icosahedron.
(b)~Convex curve on a Cubeoctahedron. 
Angles are shown at vertices in units of $\pi$.
}
\figlab{IcosaCuboctaCurves}
\end{figure}

The main result of this paper is that a convex curve lives on a cone
to its convex side, and a reflex loop 
whose other side is convex
lives on a cone to its reflex
side.
One consequence is that any convex curve (which could be
a quasigeodesic) that includes at most one
vertex lives on a cone to both sides.
We also show that a convex loop might not live on a cone to its convex side.

\paragraph{Visibility.}
An additional property is needed for these cones to support
our applications.
A \emph{generator} of a cone $\L$ is a half-line 
starting from the apex $a$ and lying on $\L$.
A curve $C$ that lives on $\L$ is 
\emph{visible} from the apex if every generator 
meets $C$ at one point.%
\footnote{
   In other terminology, $C$ could be said to be
   \emph{star-shaped} from $a$.
}
See again Figure~\figref{Cone3D};
Figure~\figref{ConeDevelopment}(a) ahead illustrates a $C$ not
visible from $a$.
Although it is quite possible for a curve to live on a cone but
not be visible from its apex,
we establish that, for the classes we identify, $C$ is indeed
visible from the apex of the cone on which it lives.

\section{Preliminary Tools}
\seclab{PreliminaryTools}

\paragraph{The Gauss-Bonnet Theorem.}
We will employ this theorem in two forms.
The first is that  the total curvature of $\P$ is $4 \pi$:
the sum of $\o(v)$ for all vertices $v$ of $\P$ is $4 \pi$.
It will be useful to partition the curvature into three pieces.
Let $\O_L(C)=\O_L$ be the total curvature
strictly interior to the region of $\P$ to the left of $C$,
$\O_R$ the curvature to the right, and
$\O_C$ the sum of the curvatures on $C$ (which is
nonzero only at vertices of $\P$).
Then  $\O_L + \O_C + \O_R = 4\pi$.

The second form of the Gauss-Bonnet theorem relies on the 
notion of the ``turn'' of a curve.
Define $\tau_L(c_i) = \tau_i = \pi - \a_i$ as the left \emph{turn} of
curve $C$ at corner $c_i$,
and let $\tau_L(C) = \tau_L$ be the total (left) turn of $C$,
i.e., the sum of $\tau_i$ over all corners of $C$.
(The turn at noncorner points of $C$ is zero.  Note that the curve turn at a
point is not directly related to the surface curvature at that point.)
Thus a convex curve has nonnegative turn at each corner,
and a reflex curve has nonpositive turn at each corner.
Then $\tau_L + \O_L = 2 \pi$,
and defining the analogous term to the right of $C$,
$\tau_R + \O_R = 2 \pi$.
So, if $C$ is a geodesic, $\tau_L=\tau_R = 0$ and $\O_L = \O_R = 2 \pi$.

\paragraph{Alexandrov's Gluing Theorem.}
In our proofs we use 
Alexandrov's
celebrated theorem~\cite[Thm.~1, p.~100]{a-cp-05}
that gluing polygons to form a topological sphere in such a way that at
most $2\pi$ angle is glued at any point, results in a unique
convex polyhedron.

\paragraph{Vertex Merging.}
We now explain a technique used by
Alexandrov, e.g.,~\cite[p.~240]{a-cp-05}.
Consider two vertices $v_1$ and $v_2$ of curvatures $\o_1$ and $\o_2$ 
on $\P$,
with $\o_1+\o_2 < 2 \pi$,
and cut $\P$ along a shortest path $\g(v_1,v_2)$ joining $v_1$ to $v_2$.
Construct a planar triangle $T = \bar v' \bar v_1 \bar v_2$ such that
its base $\bar v_1 \bar v_2$ has the same length as $\g(v_1,v_2)$,
and the base angles are equal to 
$\frac{1}{2}\o_1$ and respectively $\frac{1}{2}\o_2$.
Glue two copies of $T$ along the corresponding lateral sides,
and further glue the two bases of the copies to the two ``banks'' of
the cut of $\P$ along $\g(v_1,v_2)$.
By Alexandrov's Gluing Theorem,
the result is a convex polyhedral surface $\P'$.
On $\P'$, the points $v_1$ and $v_2$ are no longer vertices
because exactly the angle deficit at each has been sutured in; 
they have been replaced by
a new vertex $v'$ of curvature $\o'=\o_1+\o_2$
(preserving the total curvature).
Figure~\figref{VertexMerging}(a) illustrates this.
Here $\g(v_1,v_2) = v_1 v_2$ is the top ``roof line'' of the
house-shaped polyhedron $\P$. Because
$\o_1 = \o_2 = \frac{1}{2}\pi$, $T$ has base angles
$\frac{1}{4}\pi$ and apex angle $\frac{1}{2}\pi$.
Thus the curvature $\o'$ at $v'$ is $\pi$.
(Other aspects of this figure will be discussed later.)

Note this vertex-merging procedure only works when $\o_1+\o_2 < 2 \pi$;
otherwise the angle at the apex $\bar v'$ of $T$ would be greater than
or equal to $\pi$.
\begin{figure}[htbp]
\centering
\includegraphics[width=\linewidth]{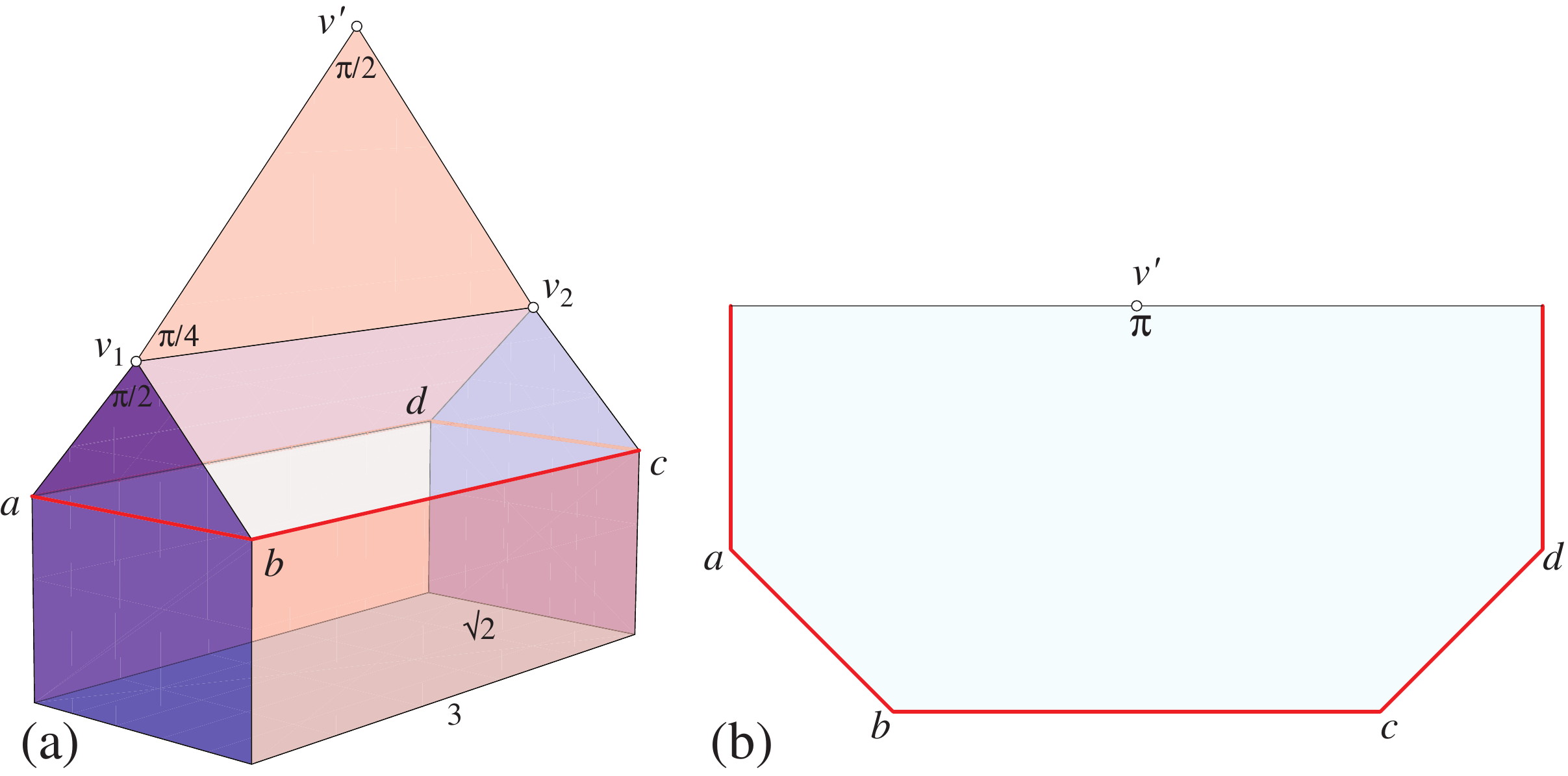}
\caption{(a)~$C=(a,b,c,d)$ is a convex curve with
angle $\frac{3}{4}\pi$ to the left at each vertex.
The curvature at $v_1$ and at $v_2$ is $\frac{1}{2}\pi$.
(b)~Cutting along the generator from $v'$ through the midpoint of $ad$ and
developing $C$ shows that it lives on a cone
with apex angle $\pi$ at $v'$.
(Base of $\P$ is $3 \times \sqrt{2}$.)
}
\figlab{VertexMerging}
\end{figure}

\paragraph{Half-Surfaces Notation.}
$C$ partitions $\P$ into two \emph{half-surfaces}:
$\P \setminus C$.  
We call the left and right half-surfaces $P_L$ and $P_R$ respectively,
or $P$ if the distinction is irrelevant.
We view each half-surface as closed, with boundary $C$.

\section{Convex Curves}
\seclab{Convex}
We start with convex curves $C$.

\paragraph{Convexity of Half-Surfaces.}
In order to apply vertex merging, we use a lemma to guarantee
the existence of a pair to merge.
We first remark that it is not the case that every half-surface
$P \subset \P$ bounded by a convex curve $C$ is \emph{convex}
in the sense that, if $x,y \in P$, 
then a shortest path $\g$ of $\P$ connecting $x$ and $y$ lies 
in $P$.  

\noindent\emph{Example.}
Let $\P$ be defined as follows.  Start with
the top half of a regular octahedron, whose four equilateral
triangle faces form a pyramid over a square base $abcd$.  
Flex the pyramid by squeezing $a$ toward $c$ slightly while maintaining the four equilateral triangles,
a motion 
which separates $b$ from $d$.
Define $\P$ to be the convex hull of these
four moved points $a'b'c'd'$ and the pyramid apex.
Let $C=(a',b',c',d')$ and let $P$ be the half-surface including the 
four equilateral triangles.  Then $a'$ and $c'$ are in $P$,
but the edge $a'c'$ of $\P$, which is the shortest path connecting
those points, is not in $P$: it crosses the ``bottom'' of $\P$.

Although $P$ may not be convex, $P$ is \emph{relatively convex} in the sense
that it is isometric to a convex half-surface:
there is some $\P^\#$ and a half-surface $P^\# \subset \P^\#$ such
that $P$ is isometric to $P^\#$ and $P^\#$ is convex.

\begin{lemma}
Every half-surface $P \subset \P$ bounded by a convex curve $C$ is
relatively convex,
i.e., $P$ is isometric to a half-surface that
contains a shortest path $\g$ between any two of its points $x$ and $y$.
More particularly, if neither $x$ nor $y$ is on $C$, then 
the shortest path $\g$
contains no points of $C$.
If exactly one of $x$ or $y$ is on $C$, then that is the only point
of $\g$ on $C$. 
\lemlab{vve}
\end{lemma}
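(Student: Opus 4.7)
The plan is to double $P$ across $C$ via Alexandrov's Gluing Theorem and exploit the resulting reflection symmetry to fold any shortest path back into one copy. Take two isometric copies $P_1,P_2$ of $P$ and glue them along $C$ by the identity on corresponding boundary points. Convexity of $C$ gives total glued angle $2\a_i\le 2\pi$ at each corner and $\pi+\pi=2\pi$ at each non-corner point of $C$, while interior points of $P$ are unaffected. Alexandrov's Gluing Theorem then produces a convex polyhedron $\P^\#$ in which $P_1$ embeds isometrically as a half-surface $P^\#$ bounded by $C$, and the swap $\sigma\colon P_1\leftrightarrow P_2$ fixing $C$ pointwise is an isometry of $\P^\#$. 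The vertices of $\P^\#$ lying on $C$ are precisely the corners $c_i$ with $\a_i<\pi$, each of curvature $2\pi-2\a_i>0$; every other point of $C$ is a regular (locally flat) point of $\P^\#$.

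Given $x,y\in P^\#$, let $\g$ be a shortest path of $\P^\#$ from $x$ to $y$, and define $\bar\g$ by applying $\sigma$ to each portion of $\g$ lying in $P_2$. Since $\sigma$ is an isometry, $\bar\g\subset P^\#$ is a path from $x$ to $y$ of length $|\g|$, hence itself a shortest path and therefore a geodesic of $\P^\#$. A geodesic cannot turn at a regular point of $\P^\#$, so $\g$ cannot cross $C$ transversally at any non-cone point; and a shortest path with interior endpoints cannot pass through a cone vertex of positive curvature, because both side-angles would have to be $\ge\pi$ while their sum is $2\a_i<2\pi$. Hence $\g$ lies entirely in $P^\#$, which proves relative convexity after transporting through the isometry $P\cong P^\#$.

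For the finer claims, suppose $x,y\notin C$ and that $\g$ contains an interior point $q\in C$. If $q$ is a $\P^\#$-vertex, the side-angle argument above yields an immediate contradiction. If $q$ is a regular point of $\P^\#$, then (transversal contact already having been excluded) $\g$ is a geodesic through $q$ tangent to $C$, and by uniqueness of geodesics with a given initial direction $\g$ coincides with a maximal geodesic arc $[q_1,q_2]$ of $C$; since $x,y\notin C$, $\g$ must leave $C$ at each $q_j$ with a genuine corner, forcing $q_j$ to be a $\P^\#$-vertex, where the same angle contradiction applies. The case of exactly one endpoint on $C$ is analogous. I expect this tangential-contact analysis to be the main obstacle: the reflection trick elegantly eliminates transversal crossings, but ruling out interior $C$-contact altogether requires combining local geodesic uniqueness with the cone-vertex angle constraint.
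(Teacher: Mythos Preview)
Your proposal is correct and follows essentially the same approach as the paper: double $P$ across $C$ via Alexandrov's Gluing Theorem to obtain $\P^\#$, then use the resulting reflection symmetry to produce a shortest path that stays in one copy. The only notable difference is in execution: the paper invokes an additional lemma of Alexandrov to realize the symmetry as an extrinsic reflection in a plane $\Pi\subset\R^3$ and then appeals to the non-branching of shortest paths, whereas you work purely intrinsically with the swap $\sigma$ and a local angle/geodesic-uniqueness analysis at points of $C$; your route is slightly longer but more self-contained.
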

\begin{proof}
We glue two copies of $P$ along $\bP=C$.
Because $C$ is convex, Alexandrov's Gluing Theorem says
the resulting surface is isometric to a unique polyhedral surface, call it $\P^\#$.
Because $\P^\#$ has intrinsic symmetry with respect to $C$,
a lemma of
Alexandrov~\cite[p.~214]{a-cp-05} applies to show that the
polyhedron $\P^\#$ has a symmetry plane $\Pi$ containing $C$.

Now consider the points $x$ and $y$ in the upper half $P$ of $P^\#$,
at or above $\Pi$.
If $\g$ is a shortest path from $x$ to $y$, then by the symmetry of $\P^\#$,
so is its reflection $\g'$ in $\Pi$.
Because shortest paths on convex surfaces do not branch,
$\g$ must lie in the closed half-space above $\Pi$, and so lies on $P$.

If neither $x$ nor $y$ are on $C$, they are strictly above $\Pi$, and $\g$ must be
as well to avoid a shortest-path branch.  
If, say, $x \in C$ but $y \not\in C$, and if $\g$ touched $C$ elsewhere,
say at $z$, then 
from $y$ to $x$ we have a shortest path $\gamma$ and 
another shortest path, composed of
the arc of $\g$ from $y$ to $z$ and
the arc of $\g'$ from $z$ to $x$, hence
we would have a shortest-path branch at $z$.
If both $x$ and $y$ are on $C$, then either $\g$ meets $C$ in exactly
those two points, or $\g \subset C$, for the same reason as above.
\end{proof}

\begin{lemma}
Let $C$ be a convex curve on $\P$, convex to its left.
Then $C$ lives on a cone $\L_L$ to its left side,
whose apex $a$ has curvature $\O_L$.
\lemlab{ConvexCone}
\end{lemma}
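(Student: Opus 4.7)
The plan is to reduce $P_L$ to a cone by iteratively merging its interior vertices, using Alexandrov's vertex-merging procedure from Section~\secref{PreliminaryTools}. Let $v_1,\dots,v_k$ be the vertices of $\P$ strictly interior to $P_L$, with curvatures $\o_1,\dots,\o_k$ summing to $\O_L$. The key tool is Lemma~\lemref{vve}: because $C$ is convex to the left, $P_L$ is relatively convex, so between any two interior vertices of $P_L$ there is a shortest path disjoint from $C$. This ensures every merge can be performed along a cut that does not disturb $C$ or a left-side neighborhood $N_L$ of $C$.

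The induction then runs as follows. If $k=0$ then $\O_L=0$ and $P_L$ is flat, embedding isometrically in a plane, regarded as a cone of apex angle $2\pi$. If $k\ge 1$, I would first merge $v_1$ with $v_2$ along a $C$-avoiding shortest path furnished by Lemma~\lemref{vve}, producing (via Alexandrov's Gluing Theorem) a convex polyhedron $\P^{(1)}$ in which $v_1,v_2$ are replaced by a new vertex $v^{(1)}$ of curvature $\o_1+\o_2$, while the strip $C\cup N_L$ and the right half-surface are left untouched. Because the angles $L(p)$ along $C$ are unchanged, $C$ remains convex on $\P^{(1)}$, and Lemma~\lemref{vve} reapplies to its left half. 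Iterating the merge $k-1$ times, I reach a convex polyhedron whose left half-surface is a topological disk bounded by the original $C$, flat everywhere except at one interior point $a$ of total curvature $\O_L$. Such a disk is intrinsically the region of a cone $\L_L$ with apex $a$ and cone angle $2\pi-\O_L$ cut off by $C$, with $a$ enclosed by $C$; since $C\cup N_L$ was preserved throughout, this is the desired isometric embedding.

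The step that requires care is the admissibility of each merge: the vertex-merging triangle construction demands the two curvatures being combined sum to strictly less than $2\pi$. From $\tau_L+\O_L=2\pi$ (Gauss-Bonnet) and $\tau_L\ge 0$ (convexity of $C$) we get $\O_L\le 2\pi$, so every partial sum $\o_1+\cdots+\o_i$ is at most $\O_L\le 2\pi$. When $\O_L<2\pi$ the inequality is strict and all merges succeed. The borderline case $\O_L=2\pi$ forces $\tau_L=0$, so $C$ is a closed geodesic and the final merging triangle degenerates; this case I would handle as a limit, with $\L_L$ interpreted as a cylinder (cone angle $0$, apex at infinity of curvature $2\pi$), matching the paper's convention. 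This degeneracy is the most delicate point of the argument, while the substantive geometric work is concentrated in the interior-shortest-path guarantee supplied by Lemma~\lemref{vve}.
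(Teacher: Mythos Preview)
Your argument is essentially the paper's own proof: Gauss--Bonnet gives $\O_L\le 2\pi$, Lemma~\lemref{vve} supplies $C$-avoiding shortest paths between interior vertices of $P_L$, and repeated Alexandrov vertex merging reduces to a single apex, with the $\O_L=2\pi$ case handled as a cylinder limit. One small inaccuracy: $\O_L=2\pi$ forces $L(p)=\pi$ everywhere along $C$, but not $R(p)=\pi$ (vertices of $\P$ may lie on $C$), so $C$ need not be a closed geodesic---this does not affect the proof, however.
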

\begin{proof}
By the Gauss-Bonnet theorem, $\tau_L + \O_L = 2 \pi$.
Because $\tau_L \ge 0$ for a convex curve,
we must have $\O_L \le 2 \pi$.
Let $V$ be the set of vertices of the half-surface $P_L$ not on $C$.

Suppose first that $\O_L < 2 \pi$. 
If $|V|=1$,
then $P_L$ is a pyramid, which is already a cone.
So suppose $|V| \ge 2$, and let $v_1$ and $v_2$ be any two vertices in $V$.
Lemma~\lemref{vve} guarantees that a shortest path $\g$ between them is in $P_L^\#$
and disjoint from $C$. 
Perform vertex merging along $\g$, resulting in a new vertex $v'$
whose curvature is the sum of that of $v_1$ and $v_2$.  Note
that merging is always possible, because $\o_1 + \o_2 \le \O_L < 2\pi$.
Also note that $v'$ is not on $C$, by Lemma~\lemref{vve}.
Let $N_L$ be some small left neighborhood of $C$ in $P_L$.
Then $N_L$ is unaffected by the vertex merging:
neither $v_1$ nor $v_2$ is in $N_L$ because it is vertex free,
and $N_L$ may be chosen narrow enough (by Lemma~\lemref{vve})
so that no portion of $\g$ is in $N_L$.
Replace $V$ by $(V \setminus \{v_1,v_2\}) \cup \{v'\}$.
 
Continue vertex merging in a like manner between vertices of $V$
until $|V|=1$, 
at which point we have $C$ and $N_L$ living on a cone, as claimed.

If $\O_L = 2\pi$, then 
the last step of vertex merging will not succeed.
However, we can see that a slight altering of the two
glued triangles so that
$\O_L < 2 \pi$
will result in the cone apex approaching infinity,
as follows.  Cut along a geodesic between the two vertices,
say $v_i$ and $v_{i+1}$, and insert double triangles of
base angles 
$\frac{1}{2}\o_i$ and respectively $\frac{1}{2}\o_{i+1} - \e_n$,
with $\e_n > 0$ and $\lim _n \e_n =0$.
And so in this case, $C$ and $N_L$ live on a cylinder, which we
consider a degenerate cone.
\end{proof}

\noindent\emph{Example.}
In Figure~\figref{VertexMerging} the two vertices inside $C$,
of curvature $\frac{1}{2}\pi$ each, are merged to
one of curvature $\pi$,
which is then the apex of a cone on which $C$ lives.

\noindent\emph{Example.}
Figure~\figref{ThreeFlat}(a) shows an example with three
vertices inside $C$.
$\P$ is a doubly covered flat pentagon, and
$C=(v_4,v_5,v_4)$ is the
closed curve consisting of a repetition of the segment $v_4 v_5$.
$C$ has $\pi$ surface angle at every point to its left, 
and so is convex. 
The curvatures at the other vertices
are $\o_1=\pi$ and $\o_2=\o_3=\frac{1}{2}\pi$.
Thus $\O_L=2\pi$,
and the proof of Lemma~\lemref{ConvexCone} shows
that $C$ lives on a cylinder.
Following the proof,
merging $v_1$ and $v_2$ removes those vertices and creates
a new vertex $v_{12}$ of curvature $\frac{3}{2}\pi$;
see~(b) of the figure.
Finally merging $v_{12}$ with $v_3$ creates a 
``vertex at infinity'' $v_{123}$ of curvature
$2\pi$.  Thus $C$ lives on a cylinder as claimed.
If we first merged $v_2$ and $v_3$ to $v_{23}$, and then $v_{23}$ to $v_1$,
the result is exactly the same, although less obviously so.
\begin{figure}[htbp]
\centering
\includegraphics[width=\linewidth]{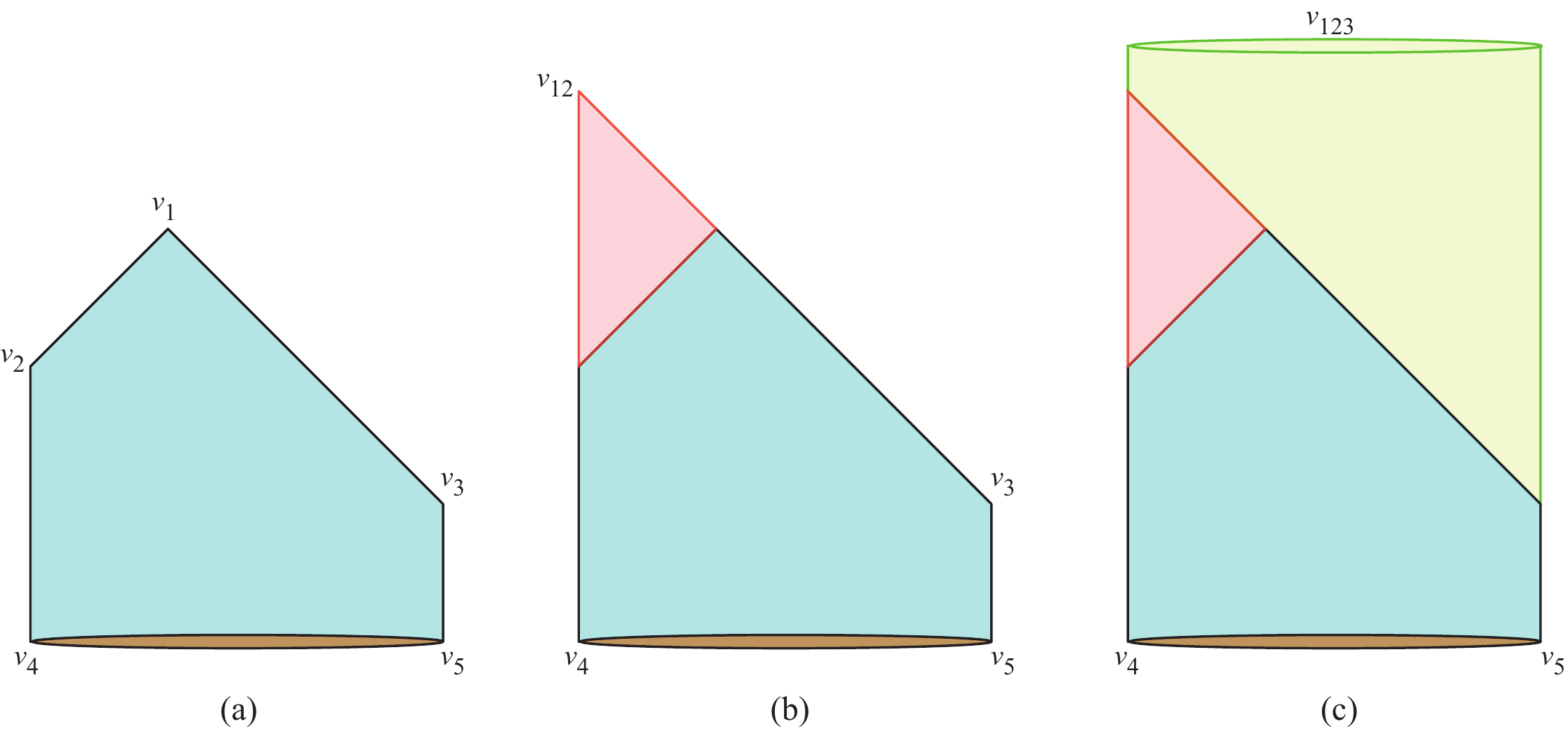}
\caption{(a)~A doubly covered flat pentagon.
(b)~After merging $v_1$ and $v_2$.
(c)~After merging $v_{12}$ and $v_3$.
}
\figlab{ThreeFlat}
\end{figure}

This last example raises the natural question of whether the cone
constructed through vertex merging in
Lemma~\lemref{ConvexCone}
is independent of the order of merging.
Indeed the determined cone is unique:

\begin{lemma}
A curve $C$ that lives on a cone $\L$ (say, to its left) 
uniquely determines that cone.
\lemlab{ConeUnique}
\end{lemma}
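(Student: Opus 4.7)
The plan is to show that the cone angle of $\L$ is an intrinsic invariant of the curve $C$, and then to invoke the fact that a cone is determined up to (intrinsic) isometry by its cone angle. Since the paper treats cones only intrinsically, two cones with the same cone angle are considered the same cone, so establishing that the cone angle is forced by $C$ alone will finish the proof.

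The key step is a Gauss-Bonnet computation on $\L$ itself. Suppose $C \cup N_L$ embeds isometrically on $\L$, with apex $a$ enclosed by the image of $C$. Apply the second form of the Gauss-Bonnet theorem on $\L$ to the closed disk $D \subset \L$ bounded by $C$ and containing $a$. The only point of concentrated curvature interior to $D$ is the apex, so the total interior curvature is $\o(a)$, while the total boundary turn, measured along $C$ from the left (i.e.\ apex) side, equals the intrinsic left turn $\tau_L$ of $C$. Gauss-Bonnet then yields
\[
\tau_L + \o(a) \;=\; 2\pi.
\]

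Next I would observe that $\tau_L$ depends only on the left-side angles $\a_i = L(c_i)$ at the corners of $C$, which are determined by $C$ and the ambient polyhedron $\P$ (in particular by a vertex-free left neighborhood $N_L$, whose isometry type is independent of the cone chosen). Since an isometric embedding of $C \cup N_L$ into any cone preserves these angles, $\tau_L$ is the same whether computed on $\P$ or on $\L$. Hence
\[
\o(a) \;=\; 2\pi - \tau_L
\]
is determined by $C$ alone, and so is the cone angle $2\pi - \o(a) = \tau_L$ of $\L$.

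Finally, two cones with equal cone angle are intrinsically isometric via the map that fixes the apex and respects arc length along generators. Therefore $\L$ is uniquely determined by $C$, as claimed. The only point that requires care is the hypothesis that the apex lies \emph{inside} the image of $C$, which is what permits the Gauss-Bonnet computation to pick up the apex curvature exactly once; this is built into the definition of ``lives on a cone,'' so no further obstacle arises.
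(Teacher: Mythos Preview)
Your Gauss--Bonnet computation is correct and cleanly pins down the apex curvature $\o(a)=2\pi-\tau_L$, hence the cone angle; this is a nice alternative to the paper's argument, which instead reads off the same quantity from the total curvature $\O_L$ enclosed by $C$ on $\P$.

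However, your proof stops at ``two cones with equal cone angle are intrinsically isometric,'' and that is weaker than what the lemma is used for. The paper's own proof shows that \emph{the region of the cone bounded by $C$}---equivalently, the position of $C$ on $\L$---is determined, not merely the isometry type of the infinite cone. This is the content needed to conclude, for instance, that different orders of vertex merging in Lemma~\lemref{ConvexCone} produce the same object, and it is what Theorem~\thmref{Reflex} invokes. Your argument uses only the single number $\tau_L$; that number cannot by itself locate $C$ on $\L$ (e.g., it does not recover the distance $|ax|$ for any $x\in C$).

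The paper fills this gap by a rolling/development argument: develop $C\cup N_L$ in the plane starting from a point $x$, obtaining endpoints $x_1,x_2$; this development depends only on the left neighborhood of $C$ on $\P$, not on which cone is chosen. The segment $x_1x_2$ together with the already-known apex angle then fixes the apex position (up to a reflection resolved by the orientation of $C$), and hence fixes the entire region bounded by $C$ on $\L$. To complete your approach you would need an analogous step: after fixing the cone angle, argue that an isometric embedding of $C\cup N_L$ into $\L$ with $a$ enclosed is unique up to isometries of $\L$. That is true, but it is precisely the nontrivial part of the lemma and requires an argument beyond Gauss--Bonnet.
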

\begin{proof}
Suppose that $C$ lives on two cones $\L$ and $\L'$.
We will show that the regions of these two cones bounded by $C$
are isometric.
First note that the apex angle of both $\L$ and $\L'$ is $\O_L$,
the total curvature inside and left of $C$.
Let $x \in C$ be a point of $C$ that has a tangent $t$ to one side, and let $x_1$ be a
point in the plane
and $t_1$ a direction vector from $x_1$.
Roll $\L$ in the plane so that $x$ and $t$ coincide with $x_1$ and
$t_1$.
Continue rolling until $x$ is encountered again; call that point of
the plane $x_2$.
The resulting positions of $x_1$ and $x_2$ are 
the same as
would be produced by cutting the cone along a generator $ax$.

If $x_1=x_2$, then both $\L$ and $\L'$ are planar and so isometric.
So assume $x_1 \neq x_2$.
If $\O_L \ge \pi$, then the cone angle $\a \le \pi$, as in
Figure~\figref{ConeDevelopment}(b).
The segment $x_1 x_2$ determines two isosceles triangles
with apex angle $\a$, only one of which can correspond to the left
side of $\overline{C}$.
\begin{figure}[htbp]
\centering
\includegraphics[width=0.95\linewidth]{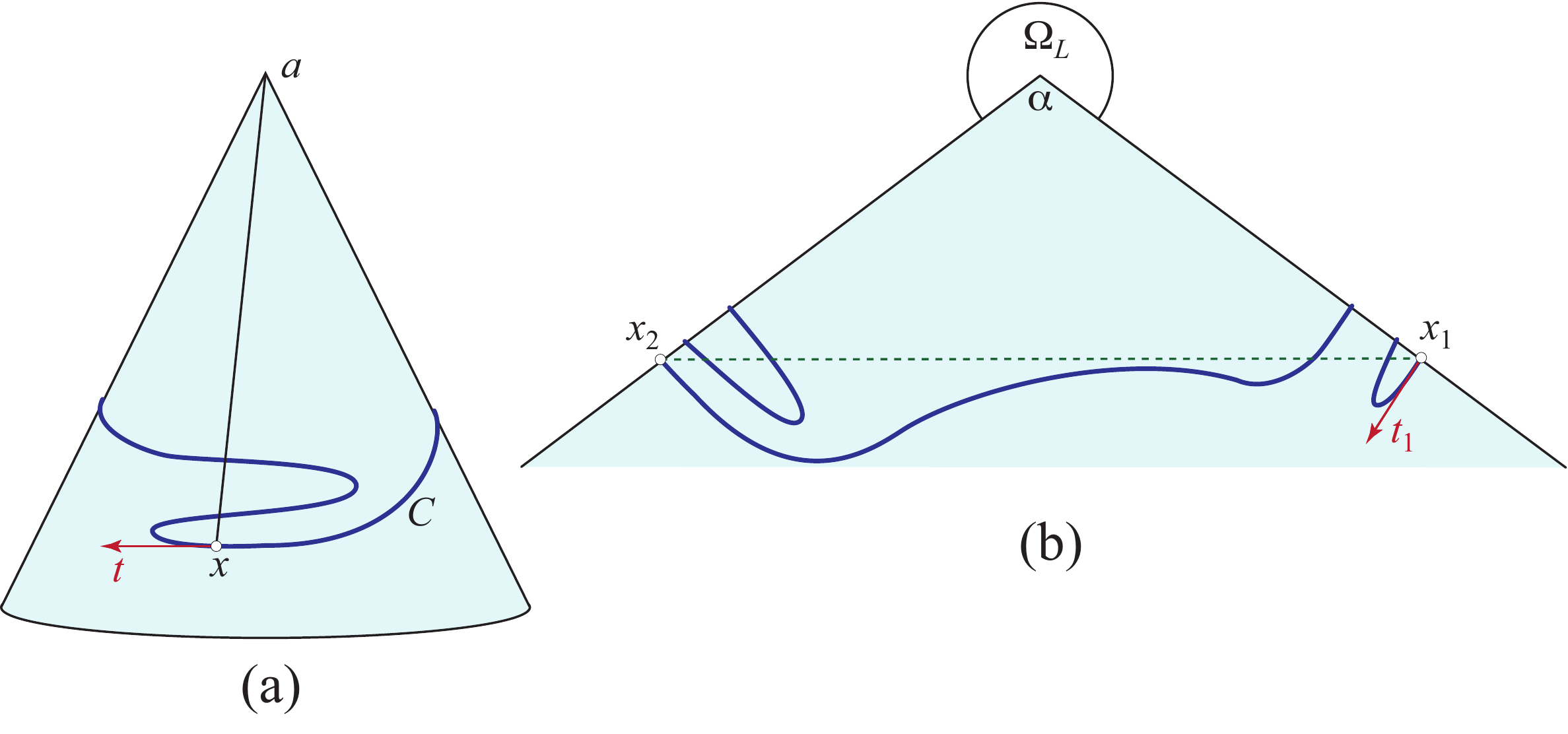}
\caption{(a)~Cone $\L$ on which $C$ lives.
(b)~Positions of $x_1$ and $x_2$ after cutting open $\L$ along $ax$.}
\figlab{ConeDevelopment}
\end{figure}
Analogously, if $\O_L < \pi$, then $x_1 x_2$ determines a unique
isosceles triangle of apex angle $\O_L$, 
the equal sides of which bound, together with $\overline{C}$,
the region of $\L$
to the left
of $\overline{C}$.
Note that $\overline{C}$ doesn't actually depend on
the cones $\L$ and $\L'$, but only on the left neighborhood of $C$ in
$P$, 
and hence this development is the same for $\L$ and $\L'$.
So, up to planar isometries, the planar unfolding of the cone
supporting $C$ is unique, and thus the cone itself
and the position of $C$ on it are unique up to isometries.
\end{proof}

\noindent
Note that this lemma does not assume that $C$ is convex;
rather it holds for any closed curve $C$.


Finally we establish the visibility property mentioned in the introduction.

\begin{lemma}
A convex curve $C$ on $\P$
is visible from the apex $a$ of the unique cone $\L$
on which it lives to its convex side. 
\lemlab{ConvexVisible}
\end{lemma}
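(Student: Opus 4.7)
The plan is to argue by contradiction using the Gauss--Bonnet theorem on a flat sub-region of $\L$ exterior to $C$. Suppose some generator $g$ emanating from $a$ meets $C$ at more than one point. Since $a$ lies in the interior of the region of $\L$ enclosed by $C$ and $g$ escapes to infinity, $g$ must cross $C$ an odd number of times, hence at least three. I would pick two consecutive crossings $q_1, q_2$ along $g$ such that the sub-segment $\sigma \subset g$ joining them lies on the non-apex side of $C$, and let $C'$ be the sub-arc of $C$ from $q_1$ to $q_2$ for which $\sigma \cup C'$ bounds a simply connected region $D \subset \L$ disjoint from $a$.

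The crucial observation is that $D$ is intrinsically flat: the only curvature point of $\L$ is $a$, and it lies in $\L \setminus \overline{D}$. Hence the planar Gauss--Bonnet theorem applies to the piecewise-geodesic boundary $\partial D = \sigma \cup C'$, and the sum of exterior angles along $\partial D$ equals $2\pi$. At each corner $c_i$ of $C$ lying on $C'$, the total angle of $\L$ at $c_i$ is $2\pi$, so the interior angle of $D$ there equals $2\pi - L(c_i) \ge \pi$, giving an exterior angle $L(c_i) - \pi \le 0$ by the convexity hypothesis $L(c_i) \le \pi$. At each endpoint $q_j$ ($j = 1, 2$), the transversal meeting of $\sigma$ and $C'$ yields an interior angle $\eta_j \in (0, \pi)$ of $D$, contributing exterior angle $\pi - \eta_j$. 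Summing:
\[
    \sum_i \bigl( L(c_i) - \pi \bigr) + (\pi - \eta_1) + (\pi - \eta_2) \;=\; 2\pi,
\]
which rearranges to $\eta_1 + \eta_2 = \sum_i \bigl( L(c_i) - \pi \bigr) \le 0$, contradicting $\eta_1, \eta_2 > 0$.

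The main obstacle I anticipate is ruling out degenerate intersections of $g$ and $C$ that would invalidate the strict inequalities above: namely, a tangential meeting at a single point, or a sub-arc of $C$ lying along $g$. I expect to eliminate these using convexity alone. If a straight piece of $C$ ran along a generator, the angle $\psi$ between the tangent to $C$ and the radial direction (in the planar development of the cone) would attain the extremal value $0$ (outward) or $\pi$ (inward) along that piece; the constraint $\tau_i \ge 0$ at the corner terminating the piece would then either force $\psi$ past $\pi$---placing $a$ on the wrong side of $C$---or force the piece to extend straight into $a$, both absurd. The same reasoning guarantees that each $\eta_j$ is strictly positive at the intersections $q_1, q_2$, so the Gauss--Bonnet accounting above produces a genuine contradiction.
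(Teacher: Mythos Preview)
Your Gauss--Bonnet argument on the flat region $D$ is correct and gives a genuinely different route from the paper's proof. The paper argues more directly: assuming some generator meets $C$ twice, one rotates the generator about $a$ until two intersection points coalesce at a tangent point $x$; at such a point the convex (left) side of $C$ is forced to have $L(x)>\pi$, contradicting convexity. So the paper isolates a single bad corner, whereas you aggregate the turning defects along an entire arc $C'$ and balance them against the two endpoint angles $\eta_1,\eta_2$. Your approach is a bit longer but perfectly valid, and it has the virtue of making the role of the hypothesis $L(c_i)\le\pi$ quantitatively explicit in the angle sum.

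The weak spot is exactly the one you flag: the degenerate cases. Your discussion of the angle $\psi$ and of an edge of $C$ lying along a generator is not convincing as written---the assertion that $\tau_i\ge 0$ at the terminating corner ``forces $\psi$ past $\pi$'' or forces the edge ``straight into $a$'' does not follow without more structure, and in fact a convex arc on a developed cone can have a radial edge without immediate contradiction at the adjacent corner. A cleaner repair is a perturbation argument: if some generator meets $C$ at two or more points, then (since the number of transversal crossings along any generator from $a$ to infinity is odd) a slight rotation yields a generator meeting $C$ transversally at three or more points; since $C$ has only finitely many corners and edges, you may further choose this generator to avoid all of them. Then $q_1,q_2$ are transversal crossings at smooth points of $C$, so $\eta_1,\eta_2\in(0,\pi)$ strictly, and your computation
\[
\sum_i\bigl(L(c_i)-\pi\bigr)=\eta_1+\eta_2
\]
gives the desired contradiction without further caveats.
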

\begin{proof}
With $C$ directed so that its convex side is its left side,
which we may consider its interior,
the apex $a$ is inside $C$.
Assume there is a cone generator intersecting $C$ twice.
Then, rotating the generator around the apex in one direction
or the other eventually must reach
a generator $ax$ tangent to $C$ at $x$ where
$L(x) > \pi$,
contradicting convexity.
See Figure~\figref{ConvexSideVisibility}.
\begin{figure}[htbp]
\centering
\includegraphics[width=0.6\linewidth]{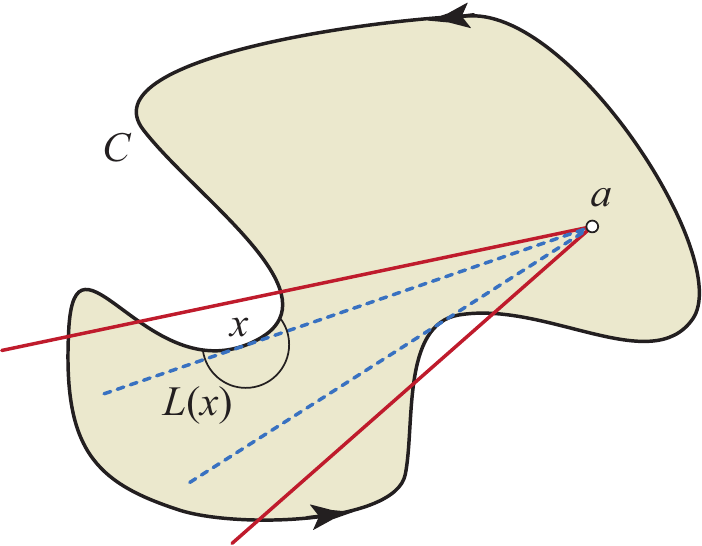}
\caption{No generator may cross $C$ twice.  
}
\figlab{ConvexSideVisibility}
\end{figure}
\end{proof}

\noindent
This lemma may as well be established with a different proof, whose sketch is
as follows.  Let $z$ be the closest point of $C$ to $a$.  Then $az$
must be orthogonal to $C$ at $p$. Inserting a ``curvature triangle''
along $az$ with apex angle $\o(a)$ flattens $P$ to a planar domain
with a convex boundary, and visibility from $a$ follows.


We gather the previous three lemmas into a summarizing theorem:
\begin{theorem}
Any curve $C$, convex to its left, lives on a unique cone $\L_L$
to its left side.
$\L_L$ has curvature $\O_L$ at its apex, and so has  apex angle
$2\pi-\O_L$.
Every point of $C$ is visible from the cone apex $a$.
\thmlab{Convex}
\end{theorem}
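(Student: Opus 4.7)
The plan is to assemble the theorem directly from the three preceding lemmas, since each of its three assertions is precisely one of them. First I would invoke Lemma~\lemref{ConvexCone} to obtain existence: a convex curve $C$ (convex to its left) lives on some cone $\L_L$, and the construction via iterated vertex merging (or the cylinder limit when $\O_L = 2\pi$) yields a cone apex $a$ whose curvature equals $\O_L$, because vertex merging preserves total curvature and the merges exhaust exactly the vertices of $P_L$ strictly interior to $C$.

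Next I would read off the apex angle from the definition of curvature at a cone apex recalled in the introduction: the curvature at the apex is $2\pi$ minus the cone angle, so $\o(a) = \O_L$ immediately gives apex angle $2\pi - \O_L$. No additional computation is needed here; it is a rewording.

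For uniqueness, I would apply Lemma~\lemref{ConeUnique}, which was proved for arbitrary closed curves that live on a cone and therefore applies in particular to the convex $C$ of our hypothesis. Finally, the visibility statement is exactly Lemma~\lemref{ConvexVisible}, established by the tangent-generator rotation argument that would otherwise force $L(x) > \pi$ at some boundary point $x$, contradicting convexity.

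Since every ingredient is already available, there is no substantive obstacle; the only care needed is to state the combination cleanly and to note explicitly that the three lemmas apply because $C$ is convex to its left. Thus the proof is essentially a one-sentence citation of Lemmas~\lemref{ConvexCone}, \lemref{ConeUnique}, and \lemref{ConvexVisible} in sequence, together with the observation that apex angle equals $2\pi$ minus apex curvature.
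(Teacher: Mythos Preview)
Your proposal is correct and matches the paper's approach exactly: the paper explicitly introduces this theorem with the sentence ``We gather the previous three lemmas into a summarizing theorem,'' and gives no proof beyond that. Your assembly of Lemmas~\lemref{ConvexCone}, \lemref{ConeUnique}, and \lemref{ConvexVisible}, together with the definitional remark that apex angle equals $2\pi$ minus apex curvature, is precisely what is intended.
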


\section{Convex Loops}
\seclab{ConvexLoops}
Consider the polyhedron $\P$ shown in
Figure~\figref{ConvexLoop}(a), which is a variation
on the example from Figure~\figref{VertexMerging}(a).
Here $C=(a,b,b',x,c',c,d)$ is a convex loop, with loop point $x$.
The cone on which it should live is analogous to 
Figure~\figref{VertexMerging}(b):
vertex merging of $v_1$ and $v_2$ again produces the cone
apex $v'$ whose curvature is $\pi$.
But $C$ does not ``fit'' on this cone, as
Figure~\figref{ConvexLoop}(b) shows;
the apex $a=v'$ is not inside $C$.
\begin{figure}[htbp]
\centering
\includegraphics[width=\linewidth]{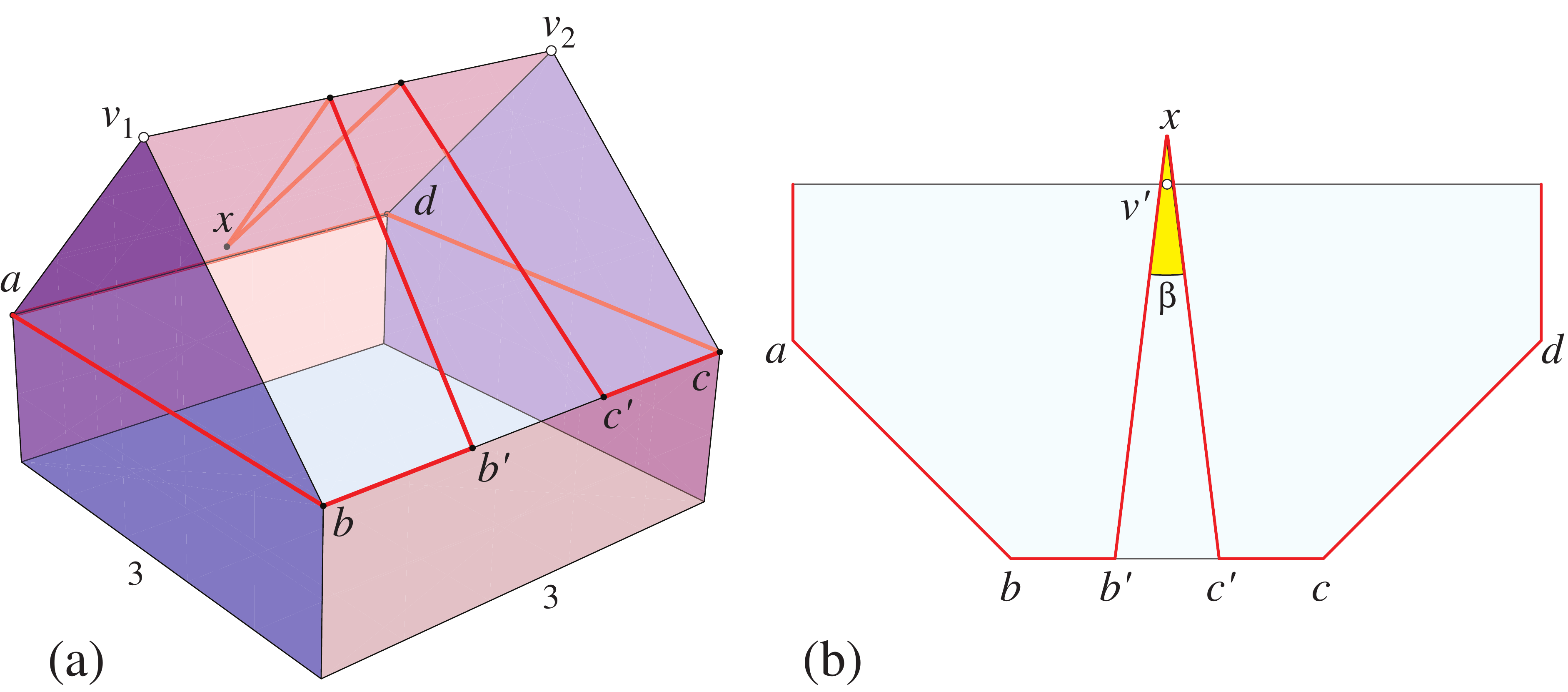}
\caption{(a)~A convex loop $C$ that does not live on a cone.
(b)~A flattening of the cone on which it should live.
(Base of $\P$ is $3 \times 3$.)
}
\figlab{ConvexLoop}
\end{figure}

We remark that, if the central ``spike'' $(b',x,c')$ is
shortened, it does live on the cone.
Even for convex loops that do live on a cone, there are
examples that fail
to satisfy the visibility property,
Lemma~\lemref{ConvexVisible}.
Simply shifting the spike in this example
to one side of $v'$ blocks visibility to portions of $C$.


\section{Reflex Curves and Reflex Loops}
\seclab{Reflex}

Recall that, for each corner $c_i$ of a curve $C$,
$\a_i + \o_i + \b_i = 2 \pi$, where $\a_i$ and $\b_i$ are the left and
right
angles at $c_i$ respectively, and $\o_i$ is the Gaussian curvature at
$c_i$.
When $C$ is vertex-free, $\o_i=0$ at all corners, and the
relationships
among the curve classes is simple and natural:
the other side of a convex curve is reflex, the other side of a reflex
curve is convex.
The same holds for the loop versions: the other side of a convex loop
is a reflex loop (because $\a_m \ge \pi$ implies $\b_m \le \pi$, where
$c_m$ is the loop point), and the other side of a reflex loop is a
convex loop.
When $C$ includes vertices, the relationships between the curve
classes
is more complicated.
The other side of a convex curve is reflex only if the curvatures
at the vertices on $C$ are small enough so that $\a_i + \o_i \le \pi$;
$C$ would still be convex even if it just included those vertices
inside.
The same holds for convex loops, as summarized in the table below.

On the other hand, the other side of a reflex curve is always convex,
because nonzero vertex curvatures only make the other side more convex.
The other side of a reflex loop is a convex loop, and it is a
convex curve if the curvature at the loop point $c_m$ is large enough
to force $\a_m \le \pi$, i.e., if $\b_m + \o_m \ge \pi$.

\begin{table}[htbp]
\begin{center}
\begin{tabular}{| c | c  |}
        \hline
\emph{Curve class} & \emph{Other side, and condition}
         \\ \hline \hline
convex & reflex only if $\forall i$, $\a_i + \o_i \le \pi$ 
        \\ \hline
convex loop & reflex loop only if $\forall i
\neq m$, $\a_i + \o_i \le \pi$ (necessarily, $\b_m \le \pi$)
        \\ \hline
reflex & convex (always)
        \\ \hline
reflex loop & convex loop (always), and convex if $\b_m + \o_m \ge \pi$
        \\ \hline
\end{tabular}
\caption{Other-side conditions for curve classes.
$m$ indexes the loop-point corner $c_m$ for loop versions.}
\tablab{Reflex}
\end{center}
\end{table}

This latter subclass of reflex loops---those whose other side is
convex---especially interest us, because any convex curve that
includes
at most one vertex is a reflex loop of that type.
All our results in this section hold for this class of curves.



\begin{lemma}
Let $C$ be a curve that is either reflex (to its right), 
or a reflex loop which is convex to
the other (left) side, 
with $\b_m < \pi$ at the loop point $c_m$.
Then $C$ lives on a cone $\L_R$ to its reflex side.
\lemlab{ReflexCone}
\end{lemma}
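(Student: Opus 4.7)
The plan is to mirror Lemma \lemref{ConvexCone}, this time working on the convex (left) side of $C$. The hypothesis makes $C$ convex on the left---automatic for pure reflex curves since $L(p) \le \pi - \o(p) \le \pi$, and given outright in the reflex-loop case---so $P_L$ is relatively convex and Lemma \lemref{vve} applies to it. The strategy is to flatten $P_L$ into a cone cap by vertex merging, this time merging \emph{all} vertices of $\P$ in $P_L$ (whether in the interior or on $C$ itself), and then to extend the resulting cap past $C$ to produce $\L_R$, with $N_R$ embedded on the \emph{outside} of $C$ on $\L_R$.

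Applying Gauss-Bonnet on both sides of $C$ and combining with $\tau_L + \tau_R = \O_C$ yields $\O_L + \O_C = 2\pi + \tau_R$; in the pure reflex case $R(p) \ge \pi$ forces $\tau_R \le 0$, and in the reflex-loop case a parallel accounting at the loop point (using $\a_m \le \pi$ and $\b_m < \pi$) is intended to reach the same bound $\O_L + \O_C \le 2\pi$. This quantity is the total curvature to be concentrated at the apex of $\L_R$.

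The central step is iterated vertex merging over $V^* = \{\,v \in P_L : v \text{ is a vertex of } \P\,\}$. For any pair in $V^*$, Lemma \lemref{vve} supplies a shortest path inside $P_L^\#$ meeting $C$ only at endpoints lying on $C$, so the cut and the pair of inserted curvature triangles sit entirely within $P_L$; crucially, $N_R$ and the right-side angles $R(p)$ along $C$ are left untouched. A pigeonhole-type argument, using the budget $\O_L + \O_C \le 2\pi$, produces at each step a pair whose curvatures sum to strictly less than $2\pi$, so merging is always admissible. The process terminates with a single interior vertex $v'$ carrying the entire curvature $\O_L + \O_C$; along the way each former on-$C$ vertex of $\P$ has had its angular deficit sutured up to a full $2\pi$, so on the modified polyhedron $C$ passes through no vertex and bounds a cone cap with apex $v'$.

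To finish, extend this cap past $C$ by continuing its planar development to obtain the full cone $\L_R$. A direct angle check at each $p \in C$ shows that the outside-$C$ angle on $\L_R$ equals $2\pi - L^{\text{new}}(p) = R(p)$, matching $N_R$, so $N_R$ embeds isometrically on the outside of $C$ on $\L_R$, with the apex $a = v'$ sitting inside (the image of) $C$ as required. The principal obstacle I anticipate is the degenerate case $\O_L + \O_C = 2\pi$, in which the last merge is blocked; I would resolve it exactly as in Lemma \lemref{ConvexCone}, by perturbing the base angles of the inserted triangles and passing to the limit, thereby sending the apex to infinity and obtaining a cylinder for $\L_R$.
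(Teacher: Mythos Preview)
Your approach is genuinely different from the paper's, and for pure reflex curves it works: there $\tau_R\le 0$ gives $\O_L+\O_C\le 2\pi$, every non-loop corner satisfies $\a_i+\o_i\le\pi$, so suturing keeps $C$ convex on the left, and repeated vertex merging inside $P_L$ (including the on-$C$ vertices) terminates in a single apex. The angle check $2\pi-(\a_i+\o_i)=\b_i$ is then exactly right.

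The gap is in the reflex-loop case. Your ``parallel accounting'' at $c_m$ does not go through: from $\a_m\le\pi$ and $\b_m<\pi$ you get no upper bound on $\o_m$, and in fact $\O_L+\O_C=2\pi+\tau_R$ with $\tau_R=(\pi-\b_m)+\sum_{i<m}(\pi-\b_i)$ can be strictly positive (take, e.g., all $\b_i=\pi$ for $i<m$ and $\b_m=\pi/2$). Then $\O_L+\O_C>2\pi$, and your merging cannot terminate: once you are down to two vertices whose curvatures sum to more than $2\pi$, the Alexandrov triangle construction is impossible (the inserted triangle would need nonpositive apex angle). This is not a degeneracy you can perturb away as in the $=2\pi$ cylinder case. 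Geometrically, when $\O_R<2\pi$ the apex of $\L_R$ lies to the \emph{right} of $C$, so no amount of merging on the left side will produce it.

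The paper's proof is engineered precisely to sidestep this obstruction. It first merges only the interior vertices of $P_L$ (total $\O_L\le2\pi$, so Lemma~\lemref{ConvexCone} applies), then handles the on-$C$ curvature by inserting curvature triangles into $N_R$ and removing them one at a time via a flatten/rotate-generator process on doubly covered cones. That process does not require the two merged curvatures to sum below $2\pi$; instead it allows the apex to drift, and in particular to cross to the reflex side of $C$ when $\O_L+\O_C$ exceeds $2\pi$. The loop point $c_m$ is handled last, so the intermediate curves $C_i$ remain convex and Lemma~\lemref{ConvexVisible} continues to supply the needed generators.
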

\begin{proof}
Again let $c_1,c_2,\ldots,c_m$ be the corners of $C$, 
with $c_m$ the loop point if $C$ is a reflex loop.
Because $C$ is convex to its left,
we have $\O_L \le 2 \pi$.
Just as in Lemma~\lemref{ConvexCone},
merge the vertices strictly in $P_L$ to one vertex $a$.
Let $\L_L$ be the cone with apex $a$ on which $C$ now lives.
It will simplify subsequent notation to let $\L = \L_L$.

\begin{figure}[htbp]
\centering
\includegraphics[width=0.75\linewidth]{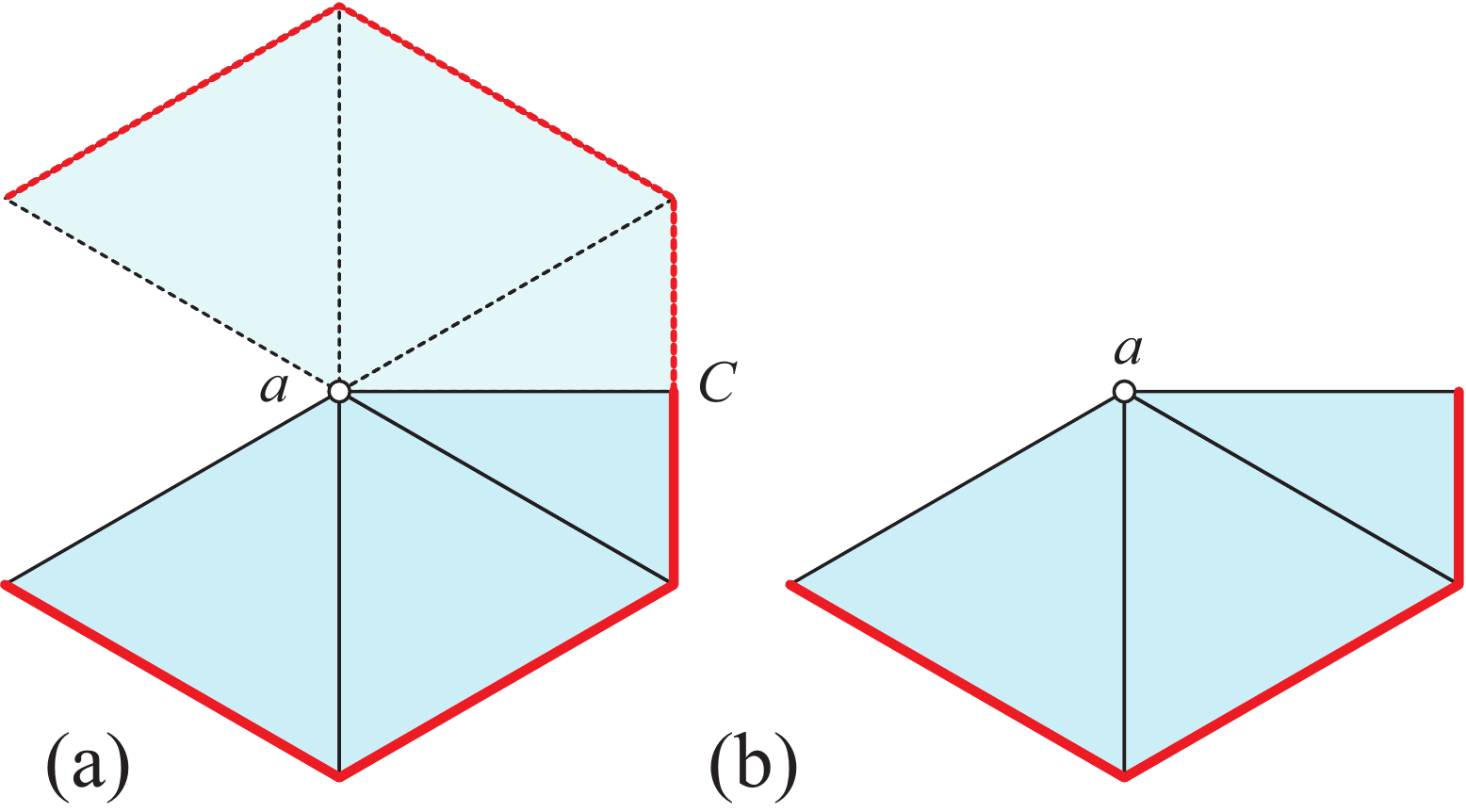}
\caption{The cone $\L$ for $C$ in Figure~\protect\figref{IcosaCuboctaCurves}(a),
opened~(a) and doubly covered~(b).
}
\figlab{IcosaDouble0}
\end{figure}

Let $N_R$ be a (small) right neighborhood of $C$, a neighborhood to
the reflex side of $C$.  
For subsequent subscript embellishment, we use $N$ to represent $N_R$.
Its shape is irrelevant to
the proof, as long as it is vertex free and its left boundary
is $C$.

Join $a$ to each corner $c_i$ by a cone-generator $g_i$
(a ray from $a$ on $\L$). Lemma~\lemref{ConvexVisible} ensures this is
possible.
Cut along $g_i$ beyond $c_i$ into $N$.  There are choices how to extend
$g_i$ beyond $c_i$, but the choice does not matter for our purposes.
For example, one could choose a cut that bisects $\b_i$ at $c_i$.
Insert along each cut into $N$ a \emph{curvature triangle}, that is,
an isosceles triangle with two sides equal to the cut length, and apex angle
$\o_i$ at $c_i$.
(If $c_i$ does not coincide with a vertex of $\P$, then $\o_i=0$ and
no curvature triangle is inserted.)
This flattens the surface at $c_i$, and ``fattens'' $N$ to $N'$ without altering
$C$ or the cone $\L$ up to $C$.
Now $N'$ lives on the same cone $\L$ that $C$ and its left neighborhood $N_L$ do.

From now on we view $\L$ and the subsequent cones we will construct
as flattened into the plane, producing a doubly covered cone with
half the apex angle.
(Notice that here
``doubly covered'' above refers to a neighborhood of the cone apex, 
and not to the image of the curve $C$.)
It is always possible to choose any generator
$ax$ for $x \in C$ and flatten so that $ax$ is the leftmost extreme
edge of the double cone.
We start by selecting $x=c_1$, so that $g_1$ is the leftmost extreme;
let $h_1$ be the rightmost extreme edge.
We pause to illustrate the construction before proceeding.

\begin{figure}[htbp]
\centering
\includegraphics[width=0.75\linewidth]{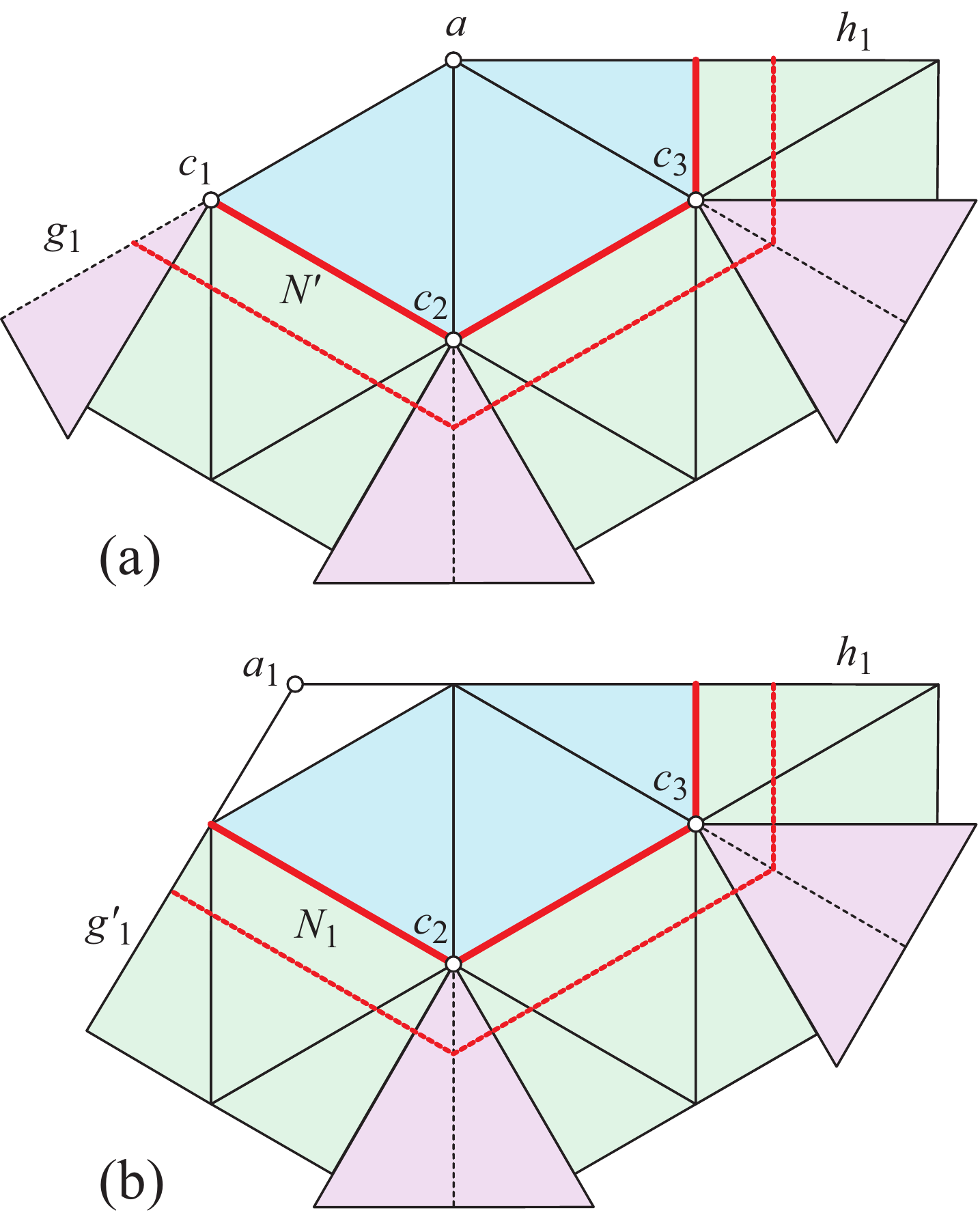}
\caption{(a)~After insertion of curvature triangles, $N'$ lives on $\L$.
(b)~Removing the doubly covered half curvature triangle at $c_1$
leads to a new cone $\L_1$.
(In this and in Figure~\protect\figref{IcosaDouble2} we
display the full icosahedron faces to the right of $C$,
although only a small neighborhood is relevant to the proof.)
}
\figlab{IcosaDouble1}
\end{figure}

Let $C$ be the curve on the icosahedron illustrated in 
Figure~\figref{IcosaCuboctaCurves}(a).
This curve already lives on the cone $\L$ without any vertex merging.
Figure~\figref{IcosaDouble0}(a) shows the five equilateral triangles
incident to the apex, and~(b) shows the corresponding doubly covered cone.
Figure~\figref{IcosaDouble1}(a) illustrates $\L$ after insertion of
the curvature triangles, each with apex angle $\o_i=\frac{1}{3}\pi$.
A possible neighborhood $N'$ is outlined.

After insertion of all curvature triangles, we in some sense erase
where they were inserted, and just treat $N'$ as a band living on
$\L$.
Now, with $g_1$ the leftmost extreme, we identify a half-curvature
triangle on the front side, matched by a half-curvature triangle on
the back side, incident to $c_1$ in $N'$. Each triangle has angle $\frac{1}{2}\o_1$
at $c_1$.
See again Figure~\figref{IcosaDouble1}(a).
Now rotate $g_1$ counterclockwise about $c_1$ by $\frac{1}{2}\o_i$, 
and cut out the two half-curvature triangles from $N'$,
regluing the front to the back along the cut segment.
Extend the rotated line $g'_1$ to meet the extension of $h_1$.
Their intersection point is the apex $a_1$ of a new (doubly covered) cone $\L_1$,
on which neither $a$ nor $c_1$ are vertices.
Note that  the rotation of $g_1$ effectively
removes an angle of measure $\o_1$ incident to $c_1$ from the $N'$
side,
and inserts it
on the other side of $C$. 
See Figure~\figref{IcosaDouble1}(b).
Call the new neighborhood $N_1$, and the new convex curve $C_1$.
$C_1$ is the same as $C$ except that the angle at $c_1$ is now
$\a_1 + \o_1$, which by the assumption of the lemma, is still convex
because $\b_1 \ge \pi$.

Now we argue that $g'_1$ does not intersect $N_1$ other than where it forms
the leftmost boundary.  For if $g'_1$ intersected $N_1$ elsewhere, then,
taking $N_1$ to be smaller and smaller, tending to $C_1$, we conclude
that $g'_1$ must intersect $C_1$ at a point other than $c_1$.  
But this contradicts the fact
that 
either of the two planar images (from the two sides of $\L$) 
of $C_1$ is convex.
Indeed $g'_1$ is a supporting line at $c_1$ to the convex set constituted by
$\L_1$ up to $C_1$.

Note that we have effectively merged vertices $c_1$ and $a$ to form $a_1$,
in a manner similar to the vertex merging used in Lemma~\lemref{ConvexCone}.
The advantage of the process just described is that it does not rely
on having a triangle half-angle no more than $\pi$ at the new cone apex.

\begin{figure}[htbp]
\centering
\includegraphics[width=0.75\linewidth]{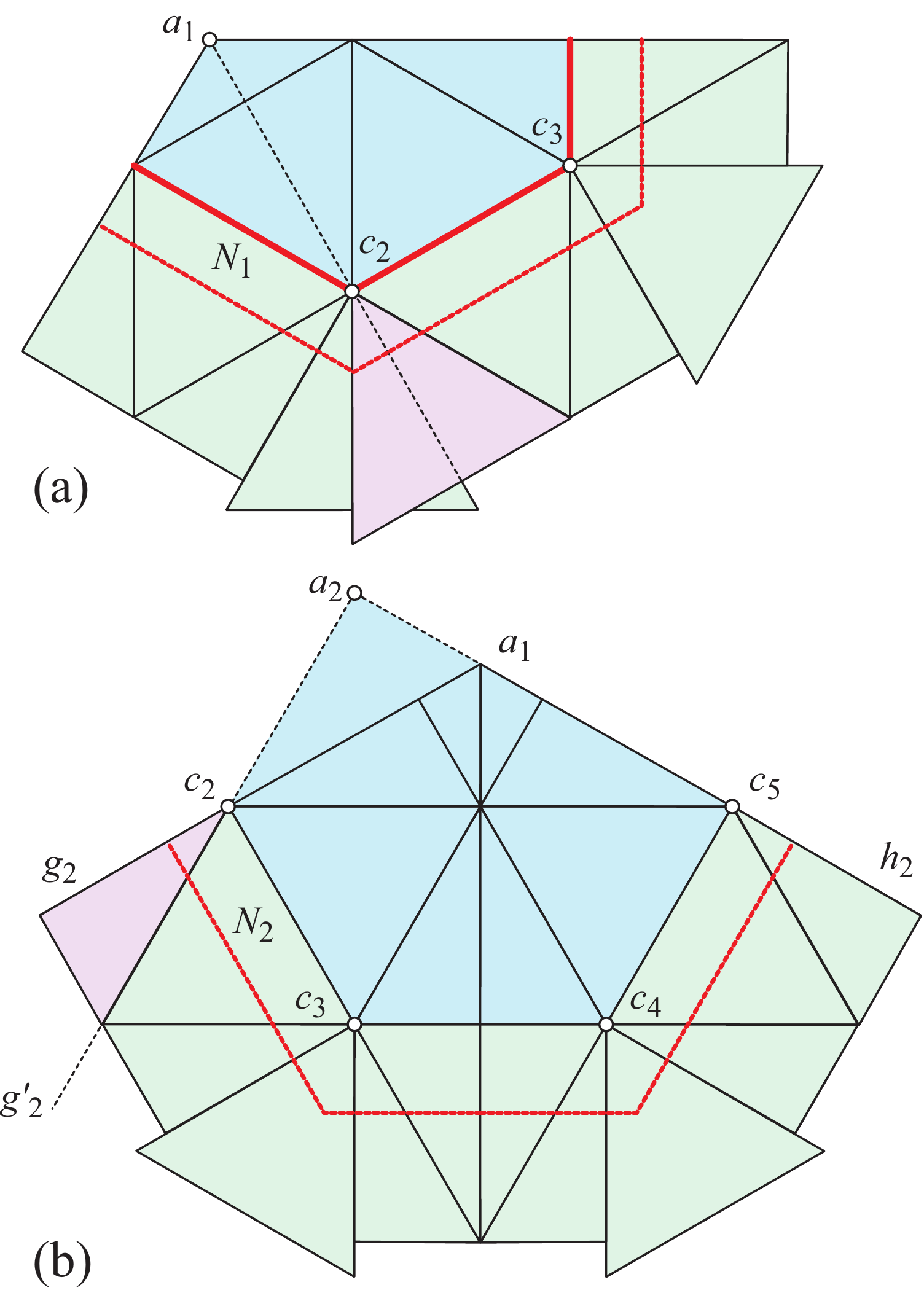}
\caption{(a)~Generator $g_2$ from $a_1$ through $c_2$ into $N_1$.
(b)~Reoriented so $g_2$ is left extreme.
}
\figlab{IcosaDouble2}
\end{figure}

Next we eliminate the curvature triangle inserted at $c_2$.  Let $g_2$ be the generator from $a_1$ through $c_2$
(again, Lemma~\lemref{ConvexVisible} applies).
Identify a curvature triangle of apex angle $\o_2$ in $N_1$ bisected
by $g_2$;
see Figure~\figref{IcosaDouble2}(a).
Now reflatten the cone $\L_1$ so that $g_2$ is the left extreme,
and let $h_2$ be the right extreme, as in~(b) of the figure.
Rotate $g_2$ by $\frac{1}{2}\o_2$ about $c_2$ to produce $g'_2$, cut out
the half-curvature triangles on both the front and back of $N_1$,
and extend $g'_2$ to meet the extension of $h_2$ at a new apex $a_2$.
Now we have a new neighborhood $N_2$, with left
boundary the convex curve $C_2$, living on a cone $\L_2$.

We apply this process through $c_1,\ldots,c_{m-1}$.
It could happen at some stage that $g'_i$ and the $h_i$ extension meet
on the other side of $C_i$, in which case the cone apex is to the reflex
side.
(Or, they could be parallel and meet ``at infinity,'' which is what occurs with
the icosahedron example.)
From the assumption of the lemma that $\b_i \ge \pi$
for $i < m$, $\a_i + \o_i \le \pi$ and so the curves $C_i$ remain convex throughout
the process.
So the argument above holds.

For the last, possibly exceptional corner $c_m$, $C_{m-1}$ 
from the previous step is convex,
but the final step could render $C_m$ nonconvex
(if $\a_m + \o_m > \pi$). 
But as there is
no further processing, this nonconvexity does not affect the proof.
\end{proof}

\medskip
\noindent
For the icosahedron example, five insertions of $\frac{1}{3}$ curvature
triangles, together with the original $\frac{1}{3}$ curvature at $a$,
produces a cylinder.  And indeed, $\b_i=\pi$ for the five $c_i$
corners of $C$, and $C$ forms a circle on a cylinder.


\begin{lemma}
Let $C$ be a curve satisfying the same conditions
as for Lemma~\lemref{ReflexCone}.
Then $C$ is visible from the apex $a$ of the cone $\L$
on which it lives to its reflex side. 
\lemlab{ReflexVisible}
\end{lemma}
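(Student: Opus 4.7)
\medskip
\noindent
The plan is to parallel the proof of Lemma~\lemref{ConvexVisible}, with reflexivity playing the role of convexity. By Lemma~\lemref{ReflexCone}, $C$ lives on $\L=\L_R$ with apex $a$ enclosed by $C$ on the reflex side. Suppose for contradiction that some generator of $\L$ meets $C$ in more than one point. Since a generator from $a$ (interior to $C$) crosses $C$ an odd number of times, any such multi-crossing generator actually crosses $C$ in at least three points $r_1,r_2,r_3$, in order of distance from $a$, with the segment $r_1r_2$ lying on the non-apex side of $C$ and the segment $r_2r_3$ on the apex (reflex) side.

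I would then rotate the generator continuously about $a$ in the direction that merges the outer pair $r_1, r_2$, eventually producing a first tangent generator $g^\star = ax$ at some $x \in C$, with both local branches of $C$ at $x$ lying on the reflex (apex) side of $g^\star$. A local analysis at $x$ then yields $R(x) < \pi$: $g^\star$ serves as a supporting half-line at $x$ for a local bulge of $C$ into the reflex side, so both incident edges of $C$ at $x$ emerge on the same side of $g^\star$, which forces the reflex-side angle at $x$ to lie strictly below $\pi$. (The contrasting case of rotating so as to merge the inner pair $r_2, r_3$ would produce a tangent with $R > \pi$, which is compatible with reflexivity and yields no contradiction; this is why the choice of rotation direction is essential.)

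This contradicts the hypothesis $R(p) \ge \pi$, which holds at every $p \in C$ except possibly at the loop point $c_m$ (where $\b_m < \pi$ is permitted). For pure reflex curves the proof is complete. For reflex loops the only remaining possibility is $x = c_m$. Since tangency at the corner $c_m$ (from the reflex side) can only occur for the single specific generator $ac_m$, I would conclude by a perturbation argument near this generator---exploiting convexity of $C$ to its left (the other-side hypothesis of the lemma)---showing that any multi-crossing generator sufficiently close to $ac_m$ has its $(r_1,r_2)$-merge tangent at a point distinct from $c_m$, yielding the required $R < \pi$ contradiction; and that if no multi-crossing generator is close to $ac_m$, then one may apply the main argument to the $(r_1,r_2)$-merge of any other multi-crossing generator.

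The main obstacle is precisely this final step: ruling out the degenerate configuration in which every multi-crossing tangency collapses onto the loop point $c_m$. This requires care with the local geometry near $c_m$, but should follow from convexity of $C$ on the opposite side, which constrains the ``shape'' of $C$ in a small neighborhood of $c_m$ enough to prevent the pathology.
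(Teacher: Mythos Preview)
Your approach is genuinely different from the paper's, and it has two real gaps.

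First, you identify the ``apex side'' with the ``reflex side'' of $C$ throughout (e.g., ``the segment $r_2r_3$ on the apex (reflex) side''). This is not always true. As recorded in Theorem~\thmref{Reflex}, when $\O_R>2\pi$ the apex of $\L_R$ lies to the \emph{left} of $C$, i.e., on the convex side; when $\O_R<2\pi$ it lies to the right. So the side of the tangent generator on which $C$ locally lies does not translate into $R(x)<\pi$ in the uniform way you assume. At minimum you would need to split into the two cases and choose the merge direction accordingly; as written, the argument simply fails when the apex sits on the convex side.

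Second, and more seriously, your handling of the loop point $c_m$ is only a sketch, and you yourself flag it as the main obstacle. The perturbation idea is plausible but not carried out, and it is exactly here that a naive tangent-rotation argument can break down: nothing prevents the only tangent generator produced by your rotation from being $ac_m$ itself.

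The paper's proof sidesteps both issues with a single structural choice. It flattens $\L$ to a doubly covered planar cone and places $c_m$ on the extreme generator (the fold line). Then it argues from the \emph{left-side} convexity of $C$---which holds at every point, including $c_m$---rather than from right-side reflexivity. The upper and lower images $C_u$, $C_l$ are then boundaries of planar convex regions, and any line through $a$ meets each exactly once. Because the argument never invokes the angle condition at $c_m$, the exceptional corner causes no trouble; and because it is a direct convexity statement in the plane, the position of $a$ relative to $C$ (reflex side or convex side) is irrelevant.

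So the key idea you are missing is to exploit the \emph{other-side convexity} hypothesis directly, rather than reflexivity: left-convexity holds at all corners and yields a clean planar-convexity argument once $c_m$ is parked on the fold.
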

\begin{proof}
Again letting $c_1,\ldots,c_m$ be the corners of $C$, with
$c_m$ the possibly exceptional vertex,
we know that $\b_i \ge \pi$ for $i=1,\ldots,m{-}1$,
but it may be that $\b_m < \pi$.
Just as in the proof of Lemma~\lemref{ReflexCone},
we flatten $\L$ into the plane, this time choosing $c_m$ to lie
on the leftmost extreme  generator $L_1$of $\L$.  Let $b$ be the point of $C$ that
lies on the rightmost extreme generator $L_2$ 
in this flattening.
Finally, let $C_u$ be the portion of $C$ on the upper surface of the
flattened $\L$, and $C_l$ the portion on the lower surface.
See Figure~\figref{VisibilityNonConvex}.
\begin{figure}[htbp]
\centering
\includegraphics[width=0.75\linewidth]{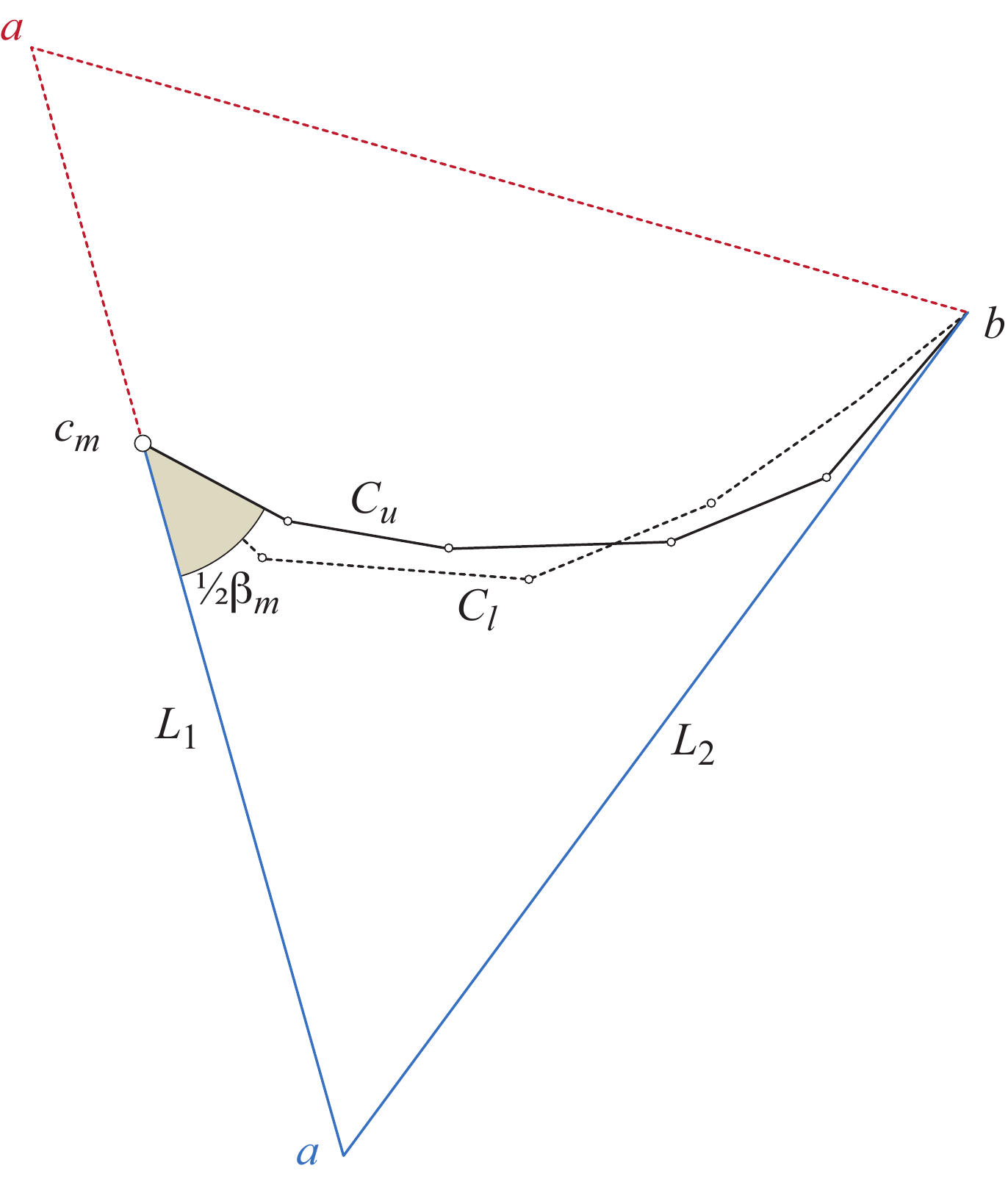}
\caption{The apex $a$ could lie either to the reflex or to the convex side of $C$.}
\figlab{VisibilityNonConvex}
\end{figure}
Now that we have placed the one anomalous corner on the extreme boundary $L_1$,
both $C_u$ and $C_l$ present a uniform aspect to the apex $a$,
whether it is to the convex or reflex side of $C$:
every corner of $C_u$ and $C_l$ is reflex (or flat) toward the
reflex side, and convex (or flat) toward the convex side.
In particular, $c_m b \cup C_u$ is a planar convex domain.
Each line through $a$ intersects $c_m b$ exactly once,
and therefore intersects $C_u$ exactly once; and similarly for $C_l$.
\end{proof}


Just as we observed for convex loops, this visibility lemma does
not hold for all reflex loops---the assumption that the other side
is convex is essential to the proof.

We summarize this section in a theorem (recall that $\O_L +\O_C + \O_R = 4\pi$).
\begin{theorem}
A curve $C$ that is either reflex (to its right), or a reflex loop which is convex to
the other (left) side, 
lives on a unique cone $\L_R$
to its reflex side.
If $\O_R > 2\pi$, then the reflex neighborhood $N_R$
is to the unbounded side of $\L_R$, i.e., the apex of $\L_R$ is left
of $C$;
if $\O_R < 2\pi$, then $N_R$ is to the bounded side, i.e.,
the apex of $\L_R$ is to the right side of $C$.
If $\O_R=2\pi$, $C \cup N_R$ lives on a cylinder.
In all cases, every point of $C$ is visible from the cone apex $a$.
\thmlab{Reflex}
\end{theorem}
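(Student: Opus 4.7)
The plan is to obtain the theorem by composing the three lemmas of this section with a short Gauss--Bonnet calculation on $\L_R$ itself. Existence of a cone $\L_R$ on which $C$ lives to its reflex side is Lemma~\lemref{ReflexCone}; uniqueness follows from Lemma~\lemref{ConeUnique}, which the text explicitly notes holds for any closed curve, not just convex ones; and visibility from the apex $a$ is Lemma~\lemref{ReflexVisible}. So the only remaining task is to locate the apex relative to $C$ on $\L_R$ as a function of $\O_R$.

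To do so, I would work intrinsically on $\L_R$. Since $N_R$ embeds isometrically, the angle at each corner $c_i$ on the $N_R$ side of $C$ is the original right angle $\b_i$, and because the total angle at each non-apex point of $\L_R$ is $2\pi$, the angle on the opposite side at $c_i$ is $2\pi - \b_i$. Thus the turn of $C$ as viewed from the $N_R$ side is $\tau_R = \sum_i(\pi - \b_i)$, while from the opposite side it is $-\tau_R$. The apex carries total cone curvature $\o(a) \in [0,2\pi]$, so Gauss--Bonnet applied to the region of $\L_R$ containing $a$ yields $\o(a) = 2\pi - \tau_R = \O_R$ when $a$ lies on the $N_R$ side (requiring $\O_R \le 2\pi$), and $\o(a) = 2\pi + \tau_R = 4\pi - \O_R$ when $a$ lies on the opposite side (requiring $\O_R \ge 2\pi$).

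These three subcases match the three cases of the theorem precisely: when $\O_R < 2\pi$ the apex must be on the $N_R$ side (right of $C$, bounded side of $\L_R$); when $\O_R > 2\pi$ it must be on the opposite side (left of $C$, with $N_R$ on the unbounded side of $\L_R$); and when $\O_R = 2\pi$, $\o(a) = 2\pi$ would be forced on either side, so the apex recedes to infinity and $\L_R$ degenerates to a cylinder, in line with the convention adopted in Section~\secref{Introduction}. The main delicate step is bookkeeping the sign of the turn when switching sides of $C$ on the cone; every other ingredient is already established by the three cited lemmas.
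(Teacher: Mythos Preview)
Your proposal is correct, and the assembly of Lemmas~\lemref{ReflexCone}, \lemref{ConeUnique}, and~\lemref{ReflexVisible} matches the paper exactly. Where you diverge is in locating the apex: the paper appeals back to the explicit construction in the proof of Lemma~\lemref{ReflexCone}, observing that the iterated removal of curvature triangles leaves the apex on the convex side precisely when $\O_L + \O_C \le 2\pi$ (equivalently $\O_R \ge 2\pi$), and records the apex curvature as $\min\{\O_L+\O_C,\O_R\}$. You instead apply Gauss--Bonnet intrinsically on $\L_R$, using that the right turn $\tau_R=\sum_i(\pi-\b_i)$ is preserved by the isometric embedding and that $\tau_R=2\pi-\O_R$ on $\P$; this forces $\o(a)=\O_R$ if $a$ is right of $C$ and $\o(a)=4\pi-\O_R$ if left, with the constraint $\o(a)\le 2\pi$ selecting the side. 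Your route is cleaner in that it does not revisit the construction and makes the trichotomy on $\O_R$ drop out of a single identity; the paper's route has the advantage of also yielding the explicit apex-curvature formula $\min\{\O_L+\O_C,\O_R\}$ as a byproduct, though of course yours gives it too once you read off the two cases.
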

\begin{proof}
The uniqueness follows from Lemma~\lemref{ConeUnique}.
The cone $\L_R$ constructed in the proof of Lemma~\lemref{ReflexCone}
results in the cone apex to the convex side of $C$ as long as
$\O_L + \O_C \le 2 \pi$, when $\O_R \ge 2 \pi$.
Excluding the cylinder cases, this justifies the claims
concerning on which side of 
$\L_R$ the neighborhood $N_R$ resides.
The apex curvature of $\L_R$ is $\min \{ \O_L + \O_C, \O_R \}$.
\end{proof}


\emph{Example.}
An example of a reflex loop that satisfies the hypotheses of Theorem~\thmref{Reflex} is shown in
Figure~\figref{CubeOctCurve2}(a).
Here $C$ has five corners, and is convex to one side at each.
$C$ passes through only one vertex of the cuboctahedron $\P$,
and so it is reflex at the four non-vertex corners
to its other side.
Corner $c_5$ coincides with a vertex of $\P$, which has 
curvature $\o_5=\frac{1}{3}\pi$.
Here $\a_5 = \b_5 = \frac{5}{6}\pi$.
Because $\b_5 < \pi$, $C$ is a reflex loop.
We have $\O_L = \frac{2}{3}\pi$ because $C$ includes two cuboctahedron
vertices, $u$ and $v$ in the figure.
$\O_C= \o_5 = \frac{1}{3}\pi$.
And therefore $\O_R = 3\pi$.
The apex curvature of $\L_L$ is $\O_L = \frac{2}{3}\pi$,
and the apex curvature of $\L_R$ is
$\min \{ \O_L + \O_C, \O_R \} = \pi$.
$N_R$ lives on the unbounded side of this cone, 
which is shown shaded in
Figure~\figref{CubeOctCurve2}(b).
Note the apex $a$ is left of $C$.
\begin{figure}[htbp]
\centering
\includegraphics[width=0.9\linewidth]{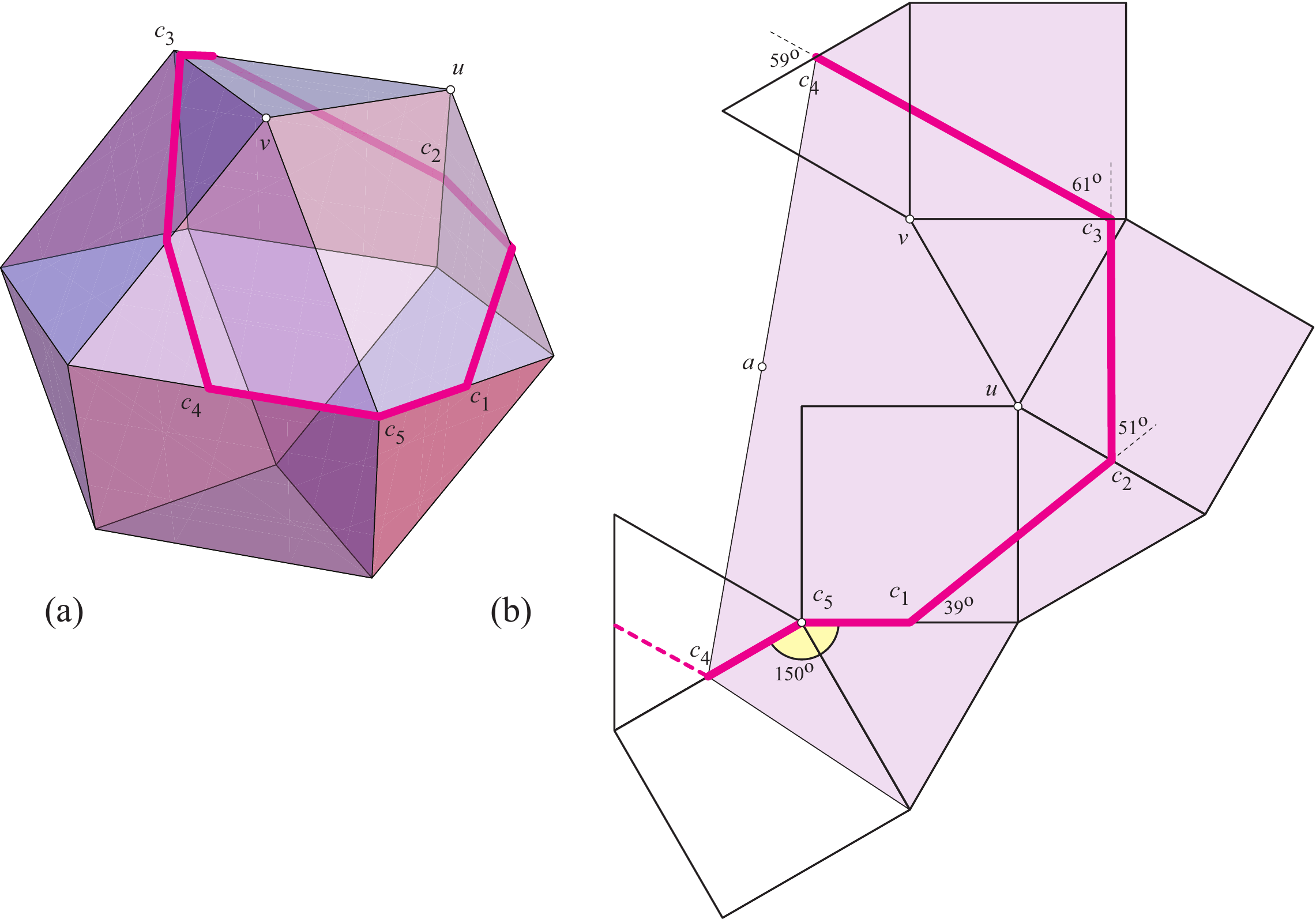}
\caption{(a)~A curve $C$ of five corners passing through one
polyhedron vertex.
$C$ is a convex to one side, and a reflex loop to the
other, with loop point $c_5$, at which $\b_5 = \frac{5}{6}\pi (=
150^\circ) < \pi$.
(b)~The cone $\L_R$ with apex $a$ is shaded.}
\figlab{CubeOctCurve2}
\end{figure}

\section{Summary and Extensions}

\subsection{Summarizing Theorem}
\noindent
Putting Theorems~\thmref{Convex} and~\thmref{Reflex} together,
we obtain:
\noindent

\begin{theorem}
For the following classes of curves $C$ on a convex polyhedron $\P$,
we may conclude that $C$ lives on a unique cone to both sides,
and is visible from the apex of each cone:
\begin{enumerate}
\squeezelist
\item $C$ is a quasigeodesic (because they are convex to both sides).
\item $C$ is convex and passes through no vertices 
(because then
the other side is reflex).
\item $C$ is convex and passes through one vertex
(because then the other side is a reflex loop whose other side is convex).
\item $C$ is convex and passes through several vertices
such that, at all but at most one corner $c_i$ of $C$,
$\a_i + \o_i \le \pi$.
In this situation, $C$ is a reflex loop to the other side because $\b_i \ge \pi$
at all but at most one vertex.
\end{enumerate}
\thmlab{ConvexReflex}
\end{theorem}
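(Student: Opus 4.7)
The plan is to reduce all four cases to the two previously established theorems: Theorem~\thmref{Convex} (any curve convex to a given side lives on a unique visible cone to that side) and Theorem~\thmref{Reflex} (a reflex curve, or a reflex loop whose other side is convex, lives on a unique visible cone to its reflex side). Since each case postulates $C$ is convex (or convex to both sides), Theorem~\thmref{Convex} immediately delivers the cone $\L_L$ to the convex side (or to both sides, for a quasigeodesic). The substantive work is therefore to verify, case by case, that the other side is a curve to which Theorem~\thmref{Reflex} applies.

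First I would handle case~1: if $C$ is a quasigeodesic, then $L(p)\le\pi$ and $R(p)\le\pi$ at every point, so $C$ is convex to both sides. Theorem~\thmref{Convex} applied to each side yields cones $\L_L$ and $\L_R$ with the stated visibility, and we are done without invoking Theorem~\thmref{Reflex} at all. For case~2, the relation $\a_i+\b_i+\o_i=2\pi$ with $\o_i=0$ at every corner (no vertices on $C$) forces $\b_i=2\pi-\a_i\ge\pi$ at every corner, so the right side is reflex; Theorem~\thmref{Reflex} then supplies $\L_R$.

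Cases~3 and~4 are the mildly delicate ones. In case~3, $C$ passes through exactly one vertex, say $c_m$; at every non-vertex corner $\o_i=0$, so $\b_i\ge\pi$ exactly as before, while at $c_m$ we may have $\b_m<\pi$. Thus the other side of $C$ is a reflex loop with loop point $c_m$, and its other side is $C$ itself, which is by hypothesis convex. This is precisely the hypothesis of Theorem~\thmref{Reflex}, giving $\L_R$ with the visibility property. Case~4 is the same argument with the hypothesis $\a_i+\o_i\le\pi$ for all but at most one $i$ used in place of $\o_i=0$: it guarantees $\b_i=2\pi-\a_i-\o_i\ge\pi$ at all but at most one corner, so again the right side is a reflex loop whose other side is $C$, convex by assumption, and Theorem~\thmref{Reflex} applies.

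I do not expect any genuine obstacle: the summarizing theorem is really a bookkeeping statement, and the only step that requires any care is tracking the angle identity $\a_i+\b_i+\o_i=2\pi$ to confirm that the ``reflex'' or ``reflex loop'' status of the other side holds under each hypothesis. Uniqueness of each cone follows from Lemma~\lemref{ConeUnique}, which applies to any closed curve that lives on a cone, and visibility from each apex is inherited directly from Lemmas~\lemref{ConvexVisible} and~\lemref{ReflexVisible} via the two invoked theorems.
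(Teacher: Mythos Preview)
Your proposal is correct and matches the paper's own treatment: the paper presents this theorem as an immediate consequence of Theorems~\thmref{Convex} and~\thmref{Reflex}, with the case-by-case justifications embedded parenthetically in the statement itself rather than in a separate proof. Your write-up simply spells out those parenthetical remarks using the identity $\a_i+\b_i+\o_i=2\pi$, which is exactly the intended reasoning.
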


\subsection{Quasigeodesic Loops}
Our extension of the source unfolding of a
polyhedron~\cite{iov-sucpr-09}
(Section~\secref{Source} below)  holds for
classes of curves living on a cone to both sides,
while
our extension of the star unfolding of a polyhedron~\cite{iov-sucpql-10}
works for any quasigeodesic loop.
It is therefore natural to explore extending Theorem~\thmref{ConvexReflex} 
to encompass quasigeodesic loops.
Recall that
quasigeodesic loops are convex to one side, and
convex loops to the other.
Despite quasigeodesic loops being very special convex loops,
we show by example that there are quasigeodesic loops which
fail to satisfy Theorem~\thmref{ConvexReflex} in that they
do not live to a cone to both sides.

The construction is a modification of the example in 
Figure~\figref{ConvexLoop} showing that a convex loop might
not live on a cone.
In that example, $C$ is a convex loop to the left; we modify
the example so that it becomes convex to its right.
Let $\P$ be the polyhedron in Figure~\figref{ConvexLoop}(a).
Essentially we will retain $P_L$, the left half of $\P$, and replace
$P_R$ with a different surface to produce a new polyhedron $\P^*$.
Toward that end, add a new vertex $e$ at the midpoint of edge $ad$ of
$\P$.
Although we could make $e$ a true vertex with non-zero
curvature,
it is easiest to see the construction when $\o(e)=0$.
Let $C^*$ be the new curve, $C^*=(a,b,b',x,c',c,d,e)$, geometrically
the same as $C$ but now including $e$ on the path between $a$ and $d$.
So $C^*$ is still a convex loop to its left.
Let $\b = \angle b'xc'$ be the convex angle at the loop point $x$.

Now construct a planar convex polygon
$Q = (
\overline{a},
\overline{b},
\overline{b'},
\overline{x},
\overline{c'},
\overline{c},
\overline{d},
\overline{e}
)$,
each of whose edges has the same length as the corresponding 
edge of $C^*$---$|\overline{a}\overline{b}| = |ab|$, etc.---and such
that
$\angle \overline{b'}\overline{x}\overline{c'} = \b$, matching
$\angle b'xc'$.
These conditions do not uniquely determine $Q$, but any $Q$ that is
convex
and has angle $\b$ at $x$ suffices for the construction.
See Figure~\figref{QGCex3D}(a).
\begin{figure}[htbp]
\centering
\includegraphics[width=\linewidth]{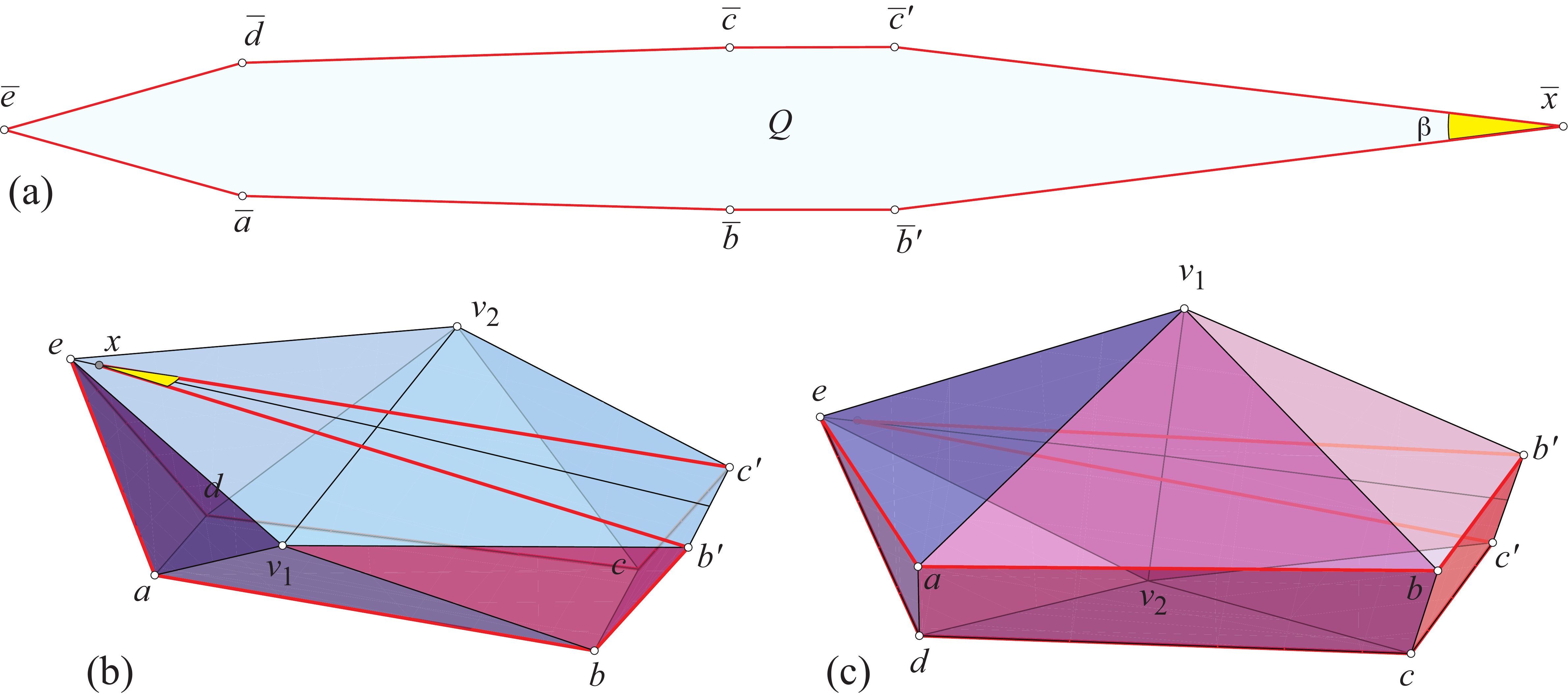}
\caption{
(a)~Convex polygon $Q$.
(b,c)~Two views of $\P^*$.
The dihedral angle at the ``roof edge'' $v_1 v_2$ was $\frac{1}{4}\pi$ in 
Figure~\protect\figref{ConvexLoop}(a) but is nearly $\pi$ in $\P^*$.
(The 3D shape here is only approximate, constructed via ad hoc computations.)
}
\figlab{QGCex3D}
\end{figure}

$\P^*$ is now constructed by gluing $P_L$, the top half of $\P$, to
$Q$,
matching corresponding vertices, $\overline{a}$ to $a$, etc.
Alexandrov's Gluing Theorem guarantees that the resulting surface
corresponds to a unique convex polyhedron $\P^*$.
Figure~\figref{QGCex3D}(b,c) shows an approximation to $\P^*$.
$C^*$ is a quasigeodesic loop on $\P^*$: a convex loop to the left
and convex by construction to the right.
$C^*$ lives on a (planar) cone to the right, but does not live on a
cone to its left for the same reason that $C$ did not on $\P$:
it does not fit.

We have established that convex loops always live on
the union of two cones,\footnote{
   Very roughly, we cut from the exceptional loop point
   $x$ via a geodesic to a point $y$ on $C$, yielding
    two convex curves $C_1$ and $C_2$ sharing $xy$,
   each of which lives on a cone.
   (This technique was used in~\cite{iov-sucpql-10}.)
} 
but we leave that a claim not pursued here.

\section{Applications}
\seclab{Applications}

\subsection{Development of Curve on Cone}
Nonoverlapping development of curves plays a role in unfolding
polyhedra without overlap~\cite{do-gfalop-07}.
Any result on simple (non-self-intersecting) development of curves
may help establishing nonoverlapping surface unfoldings.
One of the earliest results in this regard is~\cite{os-odcc3p-89},
which proved
that the left development
of a directed, closed convex curve 
does not self-intersect.
The proof used Cauchy's Arm Lemma.
The new viewpoint in our current work reproves this result without
invoking Cauchy's lemma, and extends it to a wider class of curves.

Every simple, closed curve $C$ drawn on a cone $\L$ and which encloses
the apex $a$ of $\L$ may be 
developed on the plane by rolling $\L$ on that plane.
More specifically,
select a point $x \in C$ and develop $C$ from $x$ back to $x$ again.
We call this curve in the plane $\overline{C}_x$.
Once $x$ is selected, the development is unique up to congruence in
the plane.
There is no distinction between right and left developments of a curve
on a cone;
that distinction only applies when there is nonzero curvature along $C$, as
there may be on the surface of a polyhedron $\P$.
If $g$ is a generator of $\L$ that meets $C$ in one point $\{x\}= g \cap
C$~---a
condition guaranteed by our visibility lemmas
(Lemmas~\lemref{ConvexVisible} and~\lemref{ReflexVisible})~---then
$\overline{C}_x$ is non-self-intersecting, because the unrolling of
the entire cone is non-overlapping.
Thus we obtain from Theorem~\thmref{ConvexReflex} a broader
class of curves on $\P$ that develop without intersection,
including reflex loops whose other side is convex.

\subsection{Overlapping Developments}
In general, 
$\overline{C}_x$ is not congruent to $\overline{C}_y$ when $x \neq y$.
We are especially interested in those $C$ for which
$\overline{C}_x$
is simple (non-self-intersecting) for every choice of $x$,
and we have just identified a class for which this holds.
Here we show that there exist $C$ such that $\overline{C}_x$ is
nonsimple for every choice of $x$.
We provide one specific example, but it can be generalized.

The cone $\L$ has apex angle $\a=\frac{3}{4}\pi$;
it is shown cut open and flattened in two views in
Figure~\figref{UnDevelopable}(a,b).
An open curve $C'=(p_1,p_2,p_3,p_4,p_5)$ is drawn on the cone.
Directing $C'$ in that order, it turns left by $\frac{3}{4}\pi$ at $p_2$,
$p_3$, and $p_4$.
\begin{figure}[htbp]
\centering
\includegraphics[width=\linewidth]{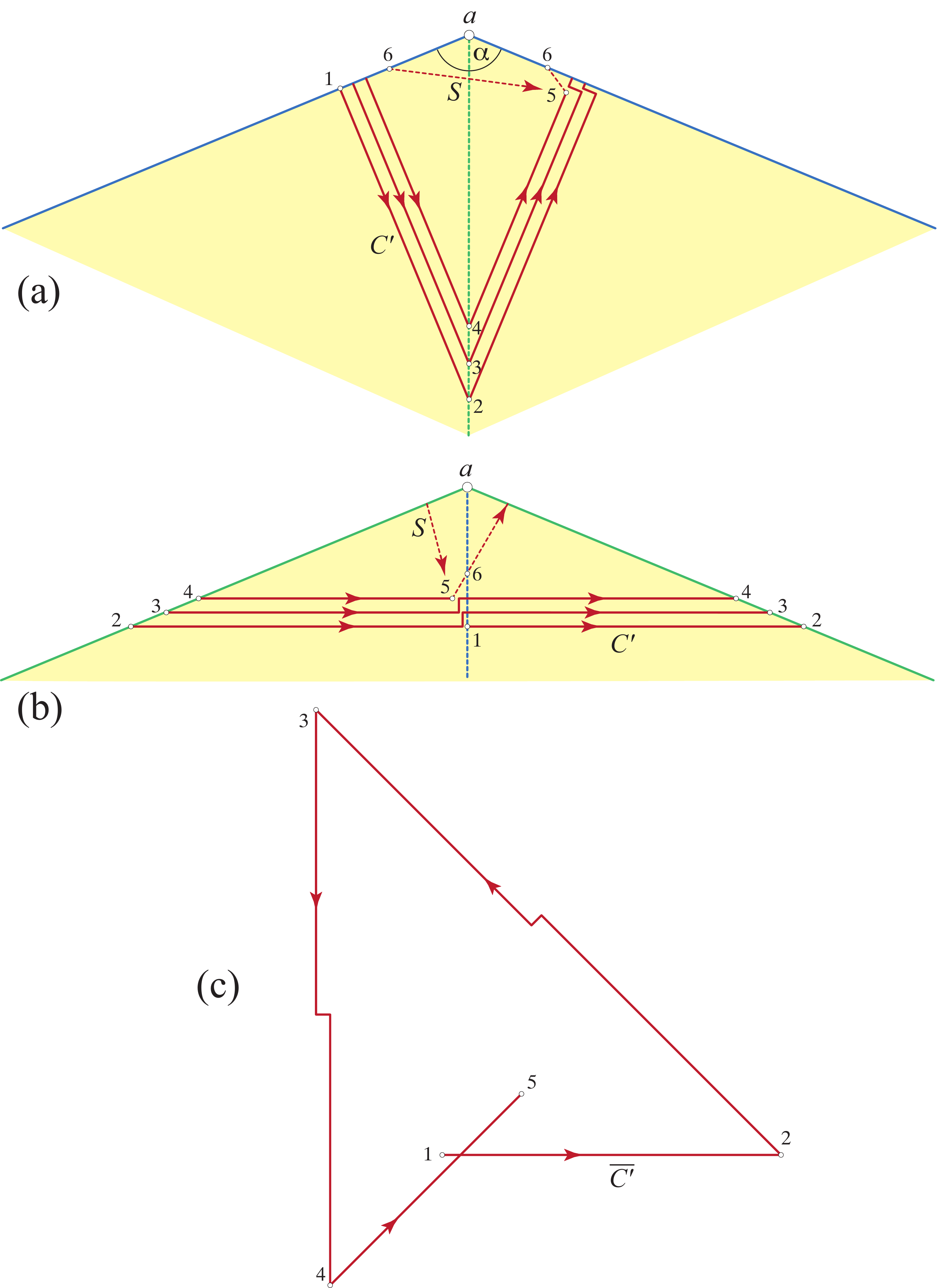}
\caption{(a)~Open curve $C'=(p_1,p_2,p_3,p_4,p_5)$ 
on cone of angle $\a$, with cone opened.
(b)~A different opening of the same cone and curve.
(c)~Development of curve $\overline{C'}$ self-intersects.}
\figlab{UnDevelopable}
\end{figure}
From $p_5$, we loop around the apex $a$ with a segment
$S=(p_5,p_6,p'_5)$,
where $p'_5$ is a point near $p_5$ (not shown in the figure).
Finally, 
we form a simple closed curve on $\L$ by then doubling $C'$
at a slight separation (again not illustrated in the figure), so that
from $p_5$ it returns in reverse order along that slightly displaced path to $p_1$
again.
Note that $C= C \cup S \cup C'$ is both closed and includes the apex
$a$
in its (left) interior.

Now, let $x$ be any point on $C$ from which we will start the
development $\overline{C}_x$.
Because $C$ is essentially $C' \cup C'$, $x$ must fall in one or the
other
copy of $C'$, or at their join at $p_1$.
Regardless of the location of $x$, at least one of the two copies of
$C'$ is unaffected.  So $\overline{C}_x$ must include $\overline{C'}$
as a subpath in the plane.

Finally, developing $C'$ reveals that it self-intersects: 
Figure~\figref{UnDevelopable}(c).
Therefore, $\overline{C}_x$ is not simple for any $x$.
Moreover, it is easy to extend this example to force self-intersection
for
many values of $\a$ and analogous curves.
The curve $C'$ was selected only because its development is self-evident.

\subsection{Source Unfolding}
\seclab{Source}
Every point $x$ on the surface of a convex polyhedron $\P$ leads
to a nonoverlapping unfolding called the \emph{source unfolding of
  $\P$ with respect to $x$}, obtained by cutting $\P$ along the cut
locus of $x$.
We can think of this as the \emph{source unfolding with respect to a
  point $x$.}
We have generalized in~\cite{iov-sucpr-09} this unfolding to unfold
$\P$ by 
cutting~---roughly speaking---~along
the cut locus of a simple closed curve $C$ on $\P$.
This unfolding is guaranteed to avoid overlap when $C$ lives on a cone
to both sides.  So it applies in exactly the conditions specified
in Theorem~\thmref{ConvexReflex}, and this is a central motivation
for our work here.

\section{Open Problems}
\seclab{Open}
We have not completely classified the curves $C$ on a convex
polyhedron $\P$ that live on a cone to both sides.
Theorem~\thmref{ConvexReflex} summarizes our results, but they are
not comprehensive.

\subsection{Slice Curves}
One particular class we could not settle are the slice curves.
A \emph{slice curve} $C$ is the intersection of $\P$ with a plane.
Slice curves in general are not convex.
The intersection of $\P$
with a plane is a convex polygon in that plane, but the surface angles
of $\P$
to either side along $C$ could be greater or smaller than $\pi$
at different points.
Slice curves were proved to develop without intersection,
to either side, in~\cite{o-dipp-03},
so they are strong candidates to live on cones.
However, we have not been able to prove that they do.
We can, however, prove that every convex curve on
$\P$ is a slice curve on some $\P'$
(this follows from \cite[Thm.~2, p.~231]{a-cp-05}),
and either side of any slice curve on $\P$ is the other
side of a convex curve on some $\P'$.

\subsection{Curve with a Nested Convex Curve}
We can extend the class of curves to which 
Lemma~\lemref{ConvexCone} (the convex-curve lemma)
applies beyond convex, but the extension is not truly substantive.
Let $C$ be a simple closed curve which encloses a convex curve $C'$
such that the region of $\P$ bounded between $C$ and $C'$ contains no
vertices.
See, e.g., Figure~\figref{IcosaZigZag}.
Then the proof of Lemma~\lemref{ConvexCone}
applies to $C'$ and $C$ lives on the same cone as $C'$.
\begin{figure}[htbp]
\centering
\includegraphics[width=0.45\linewidth]{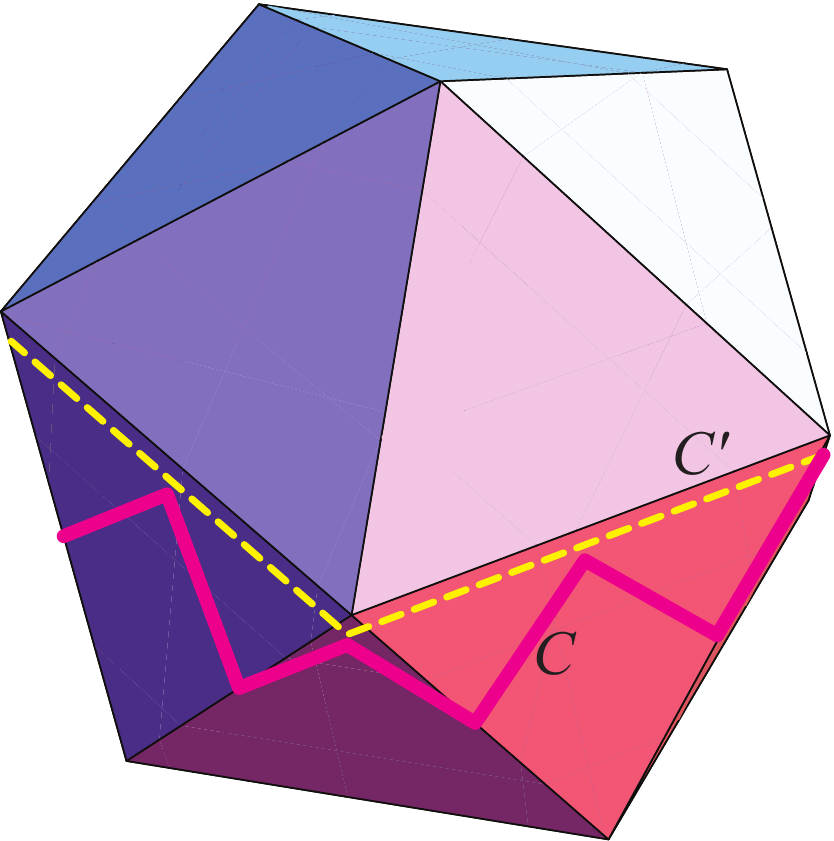}
\caption{$C'$ is convex (it is a geodesic) and $C$ lives on the
same cone (in this case a cylinder) as does $C'$.
}
\figlab{IcosaZigZag}
\end{figure}

\subsection{Cone Curves}
We have not obtained a complete classification of the curves on a cone that develop,
for every cut point $x$, as simple curves in the plane.
It would also be interesting to
identify the class of curves on cones for which there exists at
least one cut-point $x$
that leads to simple development.

\paragraph{Acknowledgments.}
We thank Jin-ichi Itoh, our coauthor 
on~\cite{iov-sucpr-09}
and~\cite{iov-sucpql-10}, the papers that motivated this investigation.

\bibliographystyle{alpha}
\bibliography{/Users/orourke/bib/geom/geom}
\end{document}